\newtheorem{definition}{Definition}
\newtheorem{theorem}{Theorem}
\newtheorem{lemma}{Lemma}
\newtheorem{condition}{Condition}
\def\sgn{\mathrm{sgn}}
\def\STGP{\mathcal{STGP}}
\newcommand{\beq}{ \begin{equation}}
\newcommand{\eeq}{ \end{equation}}
\newcommand{\beqn}{ \begin{eqnarray}}
\newcommand{\eeqn}{ \end{eqnarray}}
\newcommand{\bbeta}{ \mbox{\boldmath $\beta$}}
\newcommand{\bs}{ \mbox{\bf s}}
\newcommand{\calB}{{\cal B}}
\newcommand{\mR}{{\cal R}}
\newcommand{\mC}{{\cal C}}
\def\sgn{\mathrm{sgn}}
\def\STGP{\mathcal{STGP}}
\title{Scalar-on-Image Regression via the Soft-Thresholded Gaussian Process}
\author{~Jian~Kang\thanks{Jian~Kang is Assistant Professor in the Department of Biostatistics, University of Michigan, Ann Arbor, MI 48019. Email: jiankang@umich.edu.  Brian J. Reich is Associated Professor Department of Statistics, North Carolina State University, Raleigh, NC 27695. Email: bjreich@ncsu.edu. Ana-Maria Staicu is Associated Professor Department of Statistics, North Carolina State University, Raleigh, NC 27695. Email: astaicu@ncsu.edu},~Brian J. Reich~and~Ana-Maria Staicu}
\date{}                                           
\newcommand{\cm}[1]{\ignorespaces}
\def\bfa{\mathrm a}
\def\bfz{\mathrm z}
\def\bfs{\mathrm s}
\def\bfv{\mathrm v}
\def\bfh{\mathrm h}
\def\bft{\mathrm t}
\def\bfA{\mathrm A}
\def\bfS{\mathrm S}
\def\bfX{\mathrm X}
\def\bfY{\mathrm Y}
\def\bfZ{\mathrm Z}
\def\bfD{\mathrm D}
\def\bfM{\mathrm M}
\def\bfW{\mathrm W}
\def\bfeta{\overline \eta}
\def\bfK{\mathrm K}
\def\bftau{\overline \tau}
\def\bfalpha{\alpha^{\mathrm{v}}}
\def\bfbeta{\beta^{\mathrm{v}}}
\def\bfSigma{\Sigma}
\def\wtbeta{\tilde\beta^{\mathrm{v}}}
\def\cA{\mathcal A}
\def\cB{\mathcal B}
\def\cR{\mathcal{R}}
\def\cD{\mathcal D}
\def\mV{\mathrm{V}}
\def\cGP{\cG\cP}
\def\cF{\mathcal F}
\def\mVar{\mathrm{var}}
\def\mE{E}
\def\mbR{\mathbb R}
\def\md{\mathrm d}
\def\mN{\mathrm{N}}
\def\cGP{\mathcal{GP}}
\def\bbeta{\boldsymbol \beta}
\def\bs{\boldsymbol s}
\def\cA{\mathcal{A}}
\def\cGP{\mathcal{GP}}
\def\rT{\mathrm T}
\begin{document}

\bibliographystyle{asa}
\bibpunct{(}{)}{,}{a}{}{;}

\maketitle
\begin{abstract}
The focus of this work is on spatial variable selection for scalar-on-image regression. We propose a new class of Bayesian nonparametric models,  soft-thresholded Gaussian processes and develop the efficient posterior computation algorithms. Theoretically, soft-thresholded Gaussian processes provide large prior support for the spatially varying coefficients that enjoy piecewise smoothness, sparsity and continuity, characterizing the important features of imaging data. Also, under some mild regularity conditions, the soft-thresholded Gaussian process leads to the posterior consistency for both parameter estimation and variable selection for scalar-on-image regression,  even when the number of true predictors is larger than the sample size. The proposed method is illustrated via simulations, compared numerically with existing alternatives and applied to Electroencephalography (EEG) study of alcoholism.

\medskip

{\noindent {\bf Keywords:}  spatial variable selection, EEG, posterior consistency, Gaussian processes.  }

\end{abstract}

\newpage
\setlength{\baselineskip}{24pt}		
\section{Introduction}\label{s:intro}

Scalar-on-image regression has attracted considerable attention recently in both frequentist and Bayesian literature. This problem is challenging for several reasons such as: 1) the predictor is very high dimensional (two dimensional or three dimensional image), often larger than the sample size, 2) the observed predictor may be contaminated with noise and the true predictor signal may exhibit complex correlation structure, and 3) most components of the predictor may have no effect on the response, and when they have an effect it may vary smoothly. 

Regularized regression techniques are usually needed when the dimension of the predictor is much higher relative to the sample size; lasso~\citep{tibshirani1996regression} is a popular method for variable selection by employing a penalty based on the sum of the absolute values of the regression coefficients. However most approaches do not accommodate predictors with ordered components such as in the case of predictor images. One exception is the fused lasso, which generalizes the lasso by penalizing both the coefficients and their successive differences, thus ensuring both sparsity and smoothness of the effect. To incorporate spatial dependence structure of the predictors,  \cite{reiss2010functional} extended the functional principal component regression originally proposed for one dimensional functional covariates to high dimensional predictors. They modeled the coefficient function by using B-spline functions and considered common smoothing spline penalty which is not sensitive to sharp edges and jumps. Recently, \cite{WangZhu2015} proposed a new type of penalty - based on the total variation of the function - which yields piecewise smooth regression coefficients. While these approaches are computationally efficient, none of them can fully take into account the spatial dependence of the image predictor. In addition, in this framework it is not clear how to assess statistical significance.   

To overcome some of these limitations, this problem has been also approached form a Bayesian viewpoint. 
\cite{Goldsmith:2014} proposed two latent spatial processes to model the sparsity and the smoothness of the regression coefficient: specifically an Ising prior was used for the binary indicator variable that controls whether a voxel image is predictive of the response or not (sparsity), and a conditional autoregressive Gaussian process for the non-zero regression coefficients to improve the model prediction (smoothness). The use of Ising prior for the binary indicator was first discussed in \cite{smith2007spatial} in the context of high dimensional predictors and was also recently exploited by \cite{li2014spatial} who proposed it jointly with a Dirichlet process prior. To address the computational challenge of a non-closed form for the probability function, the latter work proposed an analytical approach to derive bounds for the hyperparameters. 
One of the characteristics of both  \cite{li2014spatial} and \cite{Goldsmith:2014} is that the sparsity and smoothness are controlled separately by two different spatial processes. As a result, the transition from zero-areas to non-zero neighboring areas in the regression coefficient may be very abrupt. This does not seem realistic for entire brain regions, where it is expected to see a gradual effect in contain brain areas on the response.  

We propose a novel approach to spatial variable selection in the scalar-on-image regression by modeling the regression coefficients through a soft-thresholding transformation of latent Gaussian processes, to which we refer as soft-thresholded Gaussian processes. The soft-thresholding function is well known as the solution for the lasso estimate when the design matrix is orthonormal~\citep{tibshirani1996regression}. The soft-thresholded Gaussian process leads to different model properties than the existent literature: in particular it ensures a gradual transition between the zero and non-zero effects of the neighboring locations. Theoretically, we can show that it provides a large support for the spatially varying coefficient function in the model that enjoys piecewise smoothness, sparsity and continuous properties. The idea is inspired from
\cite{Boehm:2015} who considered it as a regularization technique for spatial variable selection. This approach does not assign prior probability mass at zero for regression coefficients and it is not designed for the scalar-on-image regression.  The use of the soft-thresholded Gaussian process has an attractive computational advantage over the competing methods, where the use of Ising prior makes it impossible to have a closed form probability distribution function making the computations challenging.  In particular,  we consider a low-rank spatial model for the latent process, which is important for the scalability of the method to large datasets. 
For theoretical results, in addition to the large support, we also can show that the soft-thresholded Gaussian process leads to the posterior consistency for both parameter estimation and variable selection for scalar-on-image regression.  That is, the posterior distribution of the spatially varying coefficient function concentrates in a small neighborhood of the true value and the its sign is also consistent with the true value with probability one as the number of subjects goes to infinity. These two results only need a few mild regularity conditions; the conclusions hold even when the number of true predictors is larger than the sample size.


The proposed method is introduced for the case of single image predictor and Gaussian responses for simplicity. Nevertheless extensions to accommodate other type of covariates through a linear effect as well as generalized responses are straightforward. The methods are applied to the data from an electroencephalography (EEG) study of alcoholism (\underline{http://kdd.ics.uci.edu/datasets/eeg/eeg.data.html}), where of interest was to study the relation between the alcoholism status and the electrical brain activity over time. The data have been previously described in \cite{li2010dimension} and  \cite{zhou2014regularized} and consist of EEG signals received from 64 channel of electrodes located on subjects' scalp, corresponding to alcoholic subjects and healthy controls. The EEG signals are recorded for 256 seconds; leading to a high-dimensional predictor. Previous literature that analyzed these data ignored the spatial locations of the electrodes on the scalp, and thus considered a two-dimensional predictor. In contrast, we recover the locations of the electrodes from the standard electrode position nomenclature described by Fig. 1 of \underline{https://www.acns.org/pdf/guidelines/Guideline-5.pdf}, as shown in Fig. \ref{f:eeg}.  
We study the same scientific question by accounting for the space-temporal dependence of the predictor.

\begin{figure}
	\caption{The standard electrode position nomenclature for 10-10 system}\label{f:eeg}
	\begin{center}
	\includegraphics[width=3in]{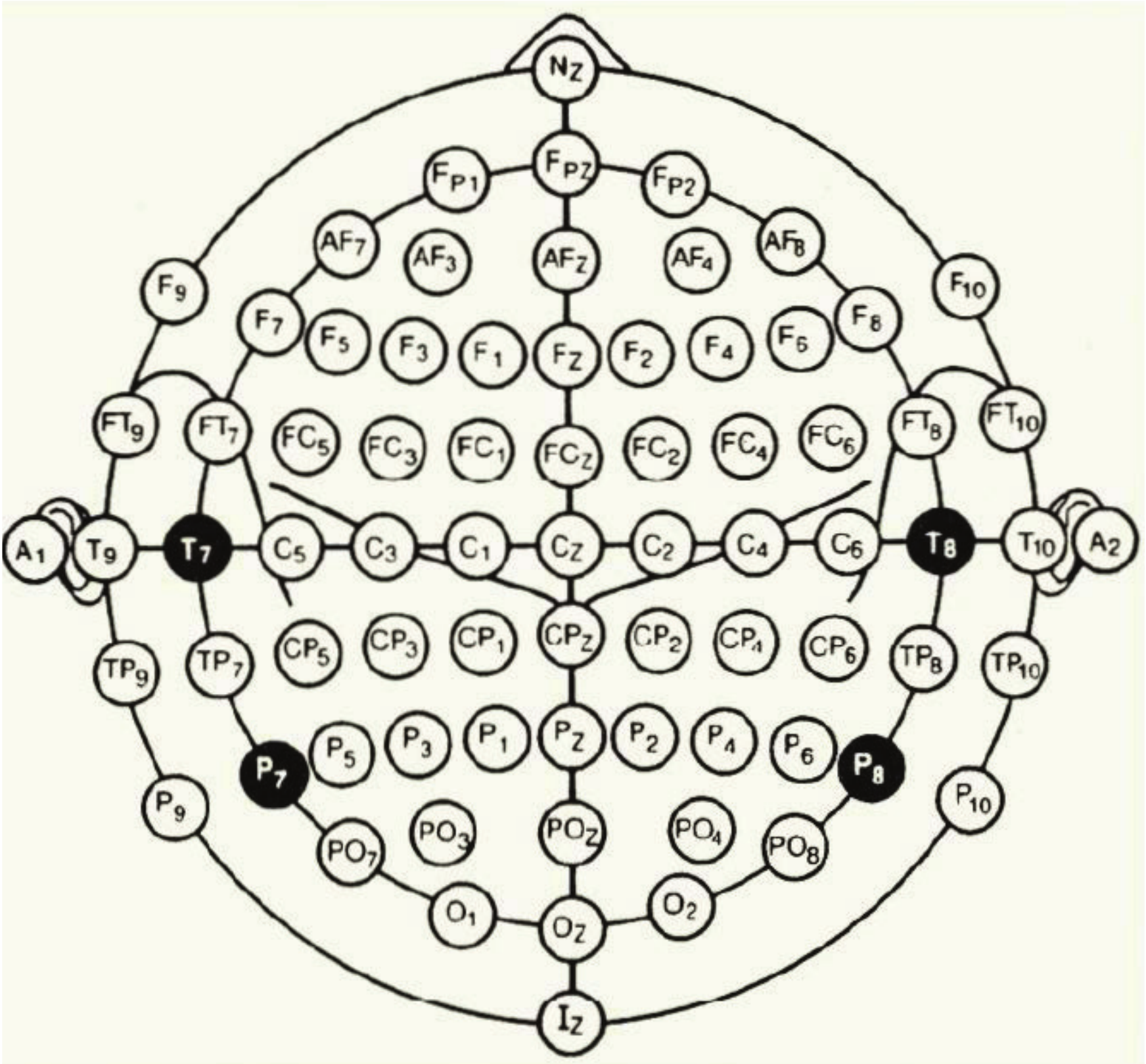}
     \end{center}
\end{figure}



\section{Model}\label{s:model}
\subsection{Scalar-on-image regression}
Let $\mbR^d$ be a $d$-dimensional vector space of real values for any integer $d\geq 1$. Suppose there are $n$ subjects in the dataset. For each subject $i$, we collect a scalar response variable, $Y_i\in\mbR^1$, a set of $p_n$ spatially distributed imaging predictors, denoted $\bfX_i = (X_{i,1},\ldots, X_{i,p_n})^{\rT} \in \mbR^{p_n}$ and other scalar covariates $\bfW_i = (W_{i,1},\ldots, W_{i,q})^{\rT}\in\mbR^q$, where $X_{i,j}$ denotes the image intensity values measured at location $\bfs_j$, for $j = 1,\ldots, p_n$. Write $\bfS = \{\bfs_j\}_{j=1}^{p_n}$ which is an subset of a compact closed region $\cB\subseteq \mbR^d$. Let $\mN(\mu,\Sigma)$ be a normal distribution with mean $\mu$ and covariance $\Sigma$ (or variance for one dimensional case).  We consider the following scalar-on-image regression model:  for $i = 1,\ldots, n$, 
\begin{eqnarray}\label{eq:main_model}
\ \{ Y_i \mid \bfW_i, \bfX_{i},\bfalpha, \beta,\sigma^2, \bfS\ \}  \sim\mN\left\{\sum_{k=1}^q \alpha_{k} W_{i,k}+p_n^{-1/2}\sum_{i=1}^{p_n}\beta(\bfs_j) X_{i,j}, \sigma^2\right\},
\end{eqnarray}
where $\bfalpha = (\alpha_{1},\ldots,\alpha_{q})^{\rT}$ is the coefficient for the scalar covariates $\bfW_i$ and $\beta(\bfs)$ is a spatially varying coefficient function defined on  $\cB$ for imaging predictor $\bfX_i$. Assume that $\{\bfW_i\}_{i=1}^n$ are fixed design covariates and $\bfS$ collects all fixed spatial locations. In practice, the normalizing scalar $p^{-1/2}_n$ can be absorbed into imaging predictors; its role is to rescale the total effects of massive imaging predictors such that they are bounded away from infinity with a large probability when $p_n$ is very large. Scientifically, imaging predictors take values that measure the body tissue contrast or the neural activities at each spatial location and the number of imaging predictors, $p_n$ is determined by the resolution of the image. Thus, the total effects of imaging predictors reflect the total amount of the intensity in the brain signals, which should not increase to infinity as the image resolution increases. In model \eqref{eq:main_model}, the response is taken to be Gaussian and only one type of imaging predictor is included, although extensions to non-Gaussian responses and multi-modality imaging predictor regression are straightforward. 
 
\subsection{Soft-thresholded Gaussian Processes}
In order to capture the characteristics of imaging predictors and their effects on the response variable,  the prior for the covariate function $\beta$ should be sparse and spatial.  That is, we assume that many locations have $\beta(\bfs_j)=0$; the sites with non-zero coefficients cluster spatially, and that the coefficients vary smoothly in clusters of non-zero coefficients.  To encode these desired properties into the prior, we represent $\beta(\bfs)$ as a transformation of a Gaussian process, $\beta(\bfs) = g_{\lambda}\{\tilde \beta(\bfs)\}$, where $g_\lambda$ is the transformation function dependent on parameter $\lambda$ and $\tilde \beta(\bfs)$ follows a Gaussian process prior.  In this transkriging \citep{cressie:1993} or Gaussian copula \citep{nelsen:1999} model, the function $g_\lambda$ determines the marginal distribution of $\beta(\bfs)$, while the covariance of the latent $\tilde\beta(\bfs)$ determines $\beta(\bfs)$'s dependence structure.  

Spatial dependence is determined by the prior for $\tilde \beta(\bfs)$.  We assume that $\tilde \beta(\bfs)$ is a Gaussian process with  zero-mean and stationary covariance function $\mathrm{cov}\{\tilde \beta(\bfs), \tilde \beta (\bfs')\} = \kappa(\bfs-\bfs')$ for some covariance function $\kappa$.  Although other transformations are possible \citep{Boehm:2015}, we select $g_\lambda$ to be the soft-thresholding function to map $\tilde \beta(\bfs)$ near zero to exact zero and thus give a sparse prior.  Let
\beq\label{beta}
   g_\lambda(x) = \left\{\begin{array}{cl}
    0 & |x|\leq \lambda\\
   	\sgn(x)(|x|-\lambda) & |x|>\lambda
   \end{array}\right.,
\eeq
where $\sgn(x)$ is the sign of $x$, i.e. $\sgn(x) = 1$ if $x>0$ and $\sgn(x) = -1$ if $x<0$ and $\sgn(0) = 0$. The thresholding parameter $\lambda>0$ determines the degree of sparsity.  This soft-thresholded Gaussian process prior is denoted
$\beta \sim \STGP(\lambda,\kappa)$.

\section{Theoretical Properties}
In this section, we examine the theoretical properties of soft thresholded Gaussian processes  as prior models for scalar-on-image regression. We first introduce the formal definition for the class of the true spatially varying coefficients in the model that can well characterize the effects of the imaging predictors on the response variable. In light of good properties of the soft thresholding function (Lemmas \ref{lemma:Lipschitz} -- \ref{lemma:latent_process}),  we show that the soft thresholded Gaussian processes have large support for the true spatially varying coefficient functions in the class we define (Theorem \ref{largsup}). Then  we can verify the prior positivity of neighborhoods  (Lemma \ref{lem:prior_pos}) and construct uniform consistent tests (Lemma \ref{lem:uniform_test}) which are needed to define a sieve of spatially varying coefficient functions and find the upper bound of the tail probability (Lemmas A.\ref{lem:covering_number}--A.\ref{lem:tail_prob}), and verify that the model is identifiable (Lemmas A.\ref{lem:gamma_low_bound}--A.\ref{lem:test}) under certain regularity conditions. Thus, following the theory developed by \cite{choudhuri2004bayesian}, we show the posterior consistency (Theorem \ref{thm:post_consistency}). Given the smoothness and sparsity of the soft-thresholded Gaussian process, we can further show the posterior sign consistency for the sparse spatially varying coefficient function (Theorem \ref{thm:sign_consistency}), indicating the posterior spatial variable selection consistency. 
\subsection{Notations and Definitions}
We start with additional notations for the theoretical development and the formal definitions of the class of spatially varying coefficient functions under consideration. We assume that all the random variables and stochastic processes that we introduced in this article are  defined in the same probability space, denoted $(\Omega, \cF, \Pi)$.  Recall that $\mbR^d$ represents the $d$-dimensional vector space of real values.  Let  $\mathbb{Z}^d_+ = \{0,1,\ldots, \}^d \subset \mbR^d$ represent the $d$-dimensional vector space of non-negative integers. For any vector $\bfv  = (v_1,\ldots, v_d)^{\rT}\in \mbR^d$, let $\|\bfv\|_p = (\sum_{l=1}^d |v_l|^p)^{1/p}$ be the $L^p$ norm for vector $\bfv$ for any $p\geq 1$, and $\|\bfv\|_\infty = \max_{l=1}^d |v_l|$ be the supremum norm. For any $x\in \mbR$, let $\lceil x \rceil$ be the smallest integer larger than $x$ and $\lfloor x \rfloor $ be the largest integer smaller than $x$. Define the event indicator $I(\cA)\in\{0,1\}$ with $I[\cA]=1$ if event $\cA$ occurs, $I[\cA] = 0$ otherwise.  For any $\bfz = (z_1,\ldots, z_d)^{\rT} \in \mathbb{Z}^d_+$, define $\bfz !  = \prod_{l=1}^d \prod_{k=1}^{z_l} k $ and define $\bfv^{\bfz} = \prod_{l=1}^d v_l^{z_l}$. 
\begin{definition}\label{def:diff_function}
Let $f(\bfs)$ be defined in the set $\cB\subseteq \mbR^d$ for $\bfs = (s_1,\ldots, s_d)$, and let $m$ be a non-negative integer.  We say $f(\bfs)$ is a differentiable function of order $m$, if $f(\bfs)$ has partial derivatives 
$$D^{\bftau} f(\bfs) = \frac{\partial^{\|\bftau\|_1}f}{s_1^{\tau_1}\cdots s_d^{\tau_d}}(\bfs) = \sum_{\|\bfeta\|_1 + \|\bftau\|_1\leq m} \frac{D^{\bftau+\bfeta}f(\bft)}{\bfeta!}(\bfs-\bft)^{\bfeta} + R_m(\bfs,\bft),$$ 
where $\bftau = (\tau_1, \ldots, \tau_d)^{\rT}\in \mathbb {Z}_+^{d}$,  $\overline\eta\in\mathbb{Z}_+^{d}$ and $\bft \in \mbR^d$.
\end{definition}
 Denote by $\mC^{m}(\cB)$ a set of differentiable functions of order $m$ defined on $\cB$. For any $f\in \mC^{m}(\cB)$, define the $L^p$ norm $\|f\|_p = \left(\int_{\cB}|f(\bfs)|^p d \bfs\right)^{1/p}$ for any $p\geq 1$ and the supremum norm is $\|f\|_{\infty} = \sup_{\bfs\in\cB}|f(\bfs)|$. The reminder $R_m(\bfs,\bft)$ has the following property. Given any point $\bfs_0$ of $\cB$ and any $\epsilon >0$, there is a $\delta>0$ such that if $\bfs$ and $\bft$ are any two points of $\cB$ with $\|\bfs-\bfs_0\|_1<\delta$ and $\|\bft-\bfs_0\|_1<\delta$, then  $|R_m(\bfs,\bft)|\leq \|\bfs-\bft\|_1^{m-\|\bftau\|_1}\epsilon$. If $\|D^{\bftau}f\|_\infty \leq M <\infty$, then $|R_m(\bfs,\bft)|\leq (M\|\bfs-\bft\|_1^{m+1})/(m+1)!$. 

\begin{definition}\label{def:beta}
Denote by $\overline\cR$ and $\partial \cR$ the closure and the boundary of any set $\cR\subseteq \cB$. Define a collection of functions $\beta(\bfs)$ that satisfy the following properties. That is,  there exist two disjoint non-empty open sets $\cR_{-1}$ and $\cR_{1}$ with $\overline\cR_1 \cap \overline \cR_{-1} = \emptyset$ such that 
\begin{enumerate}
 	\item[](\ref{cond:beta}.1)  Piecewise Smoothness: $\beta(\bfs)$ is smooth over $\overline\cR_{-1}\cup \overline\cR_{1}$,  i.e.
	$$\beta(\bfs)I[\bfs\in \overline\cR_{-1} \cup \overline\cR_{1}]\in \mC^\rho(\overline\cR_{-1}\cup \overline \cR_{1}), \mbox{ with } \rho = \lceil d/2 \rceil.$$
		\item[](\ref{cond:beta}.2) Sparsity:  $\beta(\bfs)=0$ for $\bfs\in \cR_0  $, $\beta(\bfs)>0$ for $\bfs\in \cR_1$ and $\beta(\bfs)<0$ for $\bfs\in \cR_{-1}$, where $\cR_0  = \cB - (\cR_{-1}\cup \cR_1)$ and $\cR_0-(\partial \cR_1 \cup \partial \cR_{-1})\neq \emptyset$. 
		\item[](\ref{cond:beta}.3) Continuity: $\beta(\bfs)$ is continuous over  $\cB$. That is, 
	$$\lim_{\bfs\rightarrow \bfs_0}\beta(\bfs) = \beta(\bfs_0), \qquad\mbox{ for any } \bfs_0 \in \cB.$$
\end{enumerate}
\end{definition}
Define $\Theta$ as a collection of all spatially varying coefficient functions that satisfy with conditions (\ref{cond:beta}.1) -- (\ref{cond:beta}.3) in Definition~\ref{def:beta}. 

\subsection{Conditions for Theoretical Results}
In this section, we list all the conditions that are needed to facilitate the theoretical results, although they may not  be the weakest conditions. 
\begin{condition}\label{cond:num_cov_order}
There exists $M_0>0$, $M_1>0$, $N\geq 1$, and some $\upsilon_0$, $d/(2\rho)<\upsilon_0<1$ and $\rho = \lceil d/2\rceil$ such that for all $n > N$,  $M_0 n^d \leq p_n \leq M_1 n^{2\rho \upsilon_0}$. 
\end{condition}

This condition implies that the number of imaging predictors $p_n$ should be of polynomial order of sample size. The lower bound indicates that $p_n$ needs to be sufficiently large such that the posterior distribution of the spatially varying coefficient function concentrates around the true value. 

\begin{condition}\label{cond:beta}
 The true spatially varying coefficient function in model \eqref{eq:main_model} enjoys the piecewise smoothness, sparsity and continuity properties, in short, $\beta_0 \in\Theta$. 
\end{condition}

The next two conditions summarize constraints on the spatial locations and the distribution of the imaging predictors.
\begin{condition}\label{cond:loc}
For the observed spatial locations $\bfS = \{\bfs_j\}_{j=1}^{p_n}$ in region $\cB$, there exists a set of sub-regions $\{\cB_{j}\}_{j=1}^{p_n}$ satisfying the following conditions:
\begin{enumerate}
\item[](\ref{cond:loc}.1) They form a partition of $\cB$, i.e. $\cB = \bigcup_{j=1}^{p_n} \cB_{j}$ with $\cB_{j}\cap \cB_{j'} = \emptyset$.
\item[](\ref{cond:loc}.2) For each $j = 1,\ldots, p_n$,  $\bfs_j \in \cB_j$ and $\mV(\cB_j) \leq \zeta(\cB_j) < \infty$, where $V(\bullet)$ is the Lebesgue measure and 
$$\zeta(\cB) = \sup_{\bft, \bft'\in \cB} \left[\max_{k} |t_{k} - t'_k|\right ]^d,\quad \mbox{ with } \bft = (t_1,\ldots, t_d)^{\rT}\mbox{and } \bft' = (t'_1,\ldots, t'_d)^{\rT}.$$
\item[](\ref{cond:loc}.3)  There exists a constant $0<K< \mV(\cB)$ such that $\max_{j} \zeta(\cB_j) < 1/(K p_n)$ as $n\to\infty$.
\end{enumerate}
\end{condition}

When $\cB$ is a hypercube in $\mR^d$, e.g. $\cB = [0,1]^d$, there exists a set of $\{\cB_j\}_{j=1}^{p_n}$ that equally partitions $\cB$. Then $\mV(\cB_j) = \zeta(\cB_j) = p^{-1}_{n}$. 
\def\mmin{\mathrm{min}}
\def\mmax{\mathrm{max}}

\begin{condition}\label{cond:cov}
The covariate variables $\{X_{i,1},\ldots, X_{i,p_n}\}_{i=1}^n$ are independent realizations of a stochastic process $X(\bfs)$ at spatial locations $\bfs_1,\ldots, \bfs_{p_n}$.  The stochastic process $X(\bfs)$ satisfies
\begin{enumerate}
\item[](\ref{cond:cov}.1) $E\left[X(\bfs)\right] = 0$ for all $\bfs \in \cB$. 
\item[](\ref{cond:cov}.2)  For all $n >1$, let $\bfSigma_n = (\sigma_{j,j'})_{1\leq j,j'\leq p_n}$ where  $\sigma_{j,j'}= E\left[X(\bfs_j)X(\bfs_{j'})\right]$. Let $\rho_{\mathrm{min}}(A)$ and $\rho_{\mathrm{max}}(A)$  be the smallest eigenvalue and the largest eigenvalue of a matrix $A$, respectively. Then there exist $c_\mmin$ and $c_\mmax$ with $0<c_\mmin\leq1$ and $0 <  c_\mmax < \infty$ such that for $n>1$,
$$\rho_{\mmin} \left(\bfSigma_n \right)  > c_\mmin \mbox{ and } \rho_{\mmax} \left(\bfSigma_n \right)  < c_\mmax.$$
\item[](\ref{cond:cov}.3)  For any $\epsilon>0$ and $M<\infty$, there exists $\delta > 0$ such that for any $a_1,\ldots, a_{p_n} \in \mathbb R$ with $|a_j|<M$ for all $j$, if there exits $N$, for all $n$, $p^{-1}\sum_{j=1}^{p_n}|a_j| > \epsilon,$ then 
$$\Pi\left[p_n^{-1/2}\left|\sum_{j=1}^{p_n} a_j X(\bfs_j)\right| > \delta \right] >  \delta.$$
\end{enumerate}
\end{condition}

Condition \ref{cond:cov} includes assumptions on the mean of $X(\bfs)$ and on the range of eigenvalues of the covariance matrix $\bfSigma_n$ for covariate variables. If Gaussian process $X(\bfs)$ on $[0,1]^d$  has zero mean and $\mE[X(\bfs_j)X(\bfs_{j'})] = \rho_0\exp(- p_{n}\|\bfs_j-\bfs_{j'}\|_1)$, $0<\rho_0<1$, if $j\neq j'$ and $\mE[X(\bfs_j)^2] = 1$, where $\{\bfs_j\}_{j=1}^{p_n}$ are chosen as the centers of the equal space partitions of $\cB$, then condition (\ref{cond:cov}.2) holds.  Furthermore, condition (\ref{cond:cov}.3) also holds. Specifically, for any $\epsilon > 0$, taking $\delta = c_\mmin^{1/2}\epsilon\exp(-\epsilon)$, for any $a_1,\ldots, a_{p_n}$, 
let $\xi = p_n^{-1/2}\sum_{j=1}^{p_n} a_j X(\bfs_j) \sim \mN(0, \kappa^2),$
By Condition (\ref{cond:cov}.2),
$$\kappa^2 = \frac{1}{p_n}\sum_{j,j'} a_j \sigma_{j,j'} a_{j'} \geq \frac{1}{p_n} \sum_{j=1}^{p_n} a_j^2 \rho_{\mmin}(\bfSigma_n) > c_{\mmin}\frac{1}{p_n} \sum_{j=1}^{p_n} a_j^2. $$
There exists $N$, such that for all $n>N$, $\left(p_n^{-1}\sum_{j=1}^{p_n} a_j^2\right)^{1/2} \geq p^{-1}_n \sum_{j=1}^{p_n} |a_j| > \epsilon. $
Thus, $\kappa^2>c_{\mmin}\epsilon^2$. Furthermore,
$$\Pi[|\xi| > \delta] =2\Phi\left(-\kappa^{-1}\delta\right) > 2\Phi\left(-c_{\mmin}^{-1/2}\epsilon^{-1}\delta\right) = 2\Phi\left\{-\exp(-\epsilon)\right\}>\epsilon\exp(-\epsilon)> \delta.$$

To ensure the large support property, we need the following condition on the kernel function of the Gaussian process. This condition also has been used previously by~\cite{ghosal2006posterior}. 
\begin{condition}\label{cond:kernel}
For every fixed $\bfs \in \cB$, the covariance kernel $\kappa(\bfs,\cdot)$ has continuous partial derivatives up to order $2\rho+2$. 
\end{condition}

\subsection{Large Support}
One of the desired properties for the Bayesian nonparametric model is to have  prior support over a large class of functions. In this section, we show that the support of the soft-thresholded Gaussian process is large for any spatially varying coefficient function of our interests in the scalar-on-image regression. We begin with two appealing properties of the soft-thresholding function in the following two lemmas. 
\begin{lemma}\label{lemma:Lipschitz}
The soft-thresholding function $g_\lambda(x)$ is Lipschitz continuous for any $\lambda > 0$, that is, for all $x_1, x_2 \in \mbR$, 
$|g_\lambda (x_1)-g_\lambda (x_2)| \leq |x_1 - x_2|.$
\end{lemma}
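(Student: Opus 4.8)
The plan is to exploit a monotonicity decomposition of $g_\lambda$ rather than to argue by a brute-force case analysis over the positions of $x_1,x_2$ relative to $\pm\lambda$. The key observation I would make first is that $g_\lambda$ admits the additive decomposition $g_\lambda(x)=x-h(x)$, where $h(x):=x-g_\lambda(x)$ is precisely the metric projection of $x$ onto the interval $[-\lambda,\lambda]$; explicitly, $h(x)=x$ for $|x|\le\lambda$, $h(x)=\lambda$ for $x>\lambda$, and $h(x)=-\lambda$ for $x<-\lambda$. Both $g_\lambda$ and $h$ are then readily seen to be non-decreasing functions of $x$, and by construction they satisfy $g_\lambda(x)+h(x)=x$ for every $x\in\mbR$.

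With this decomposition in hand, the proof reduces to a short sandwich argument. I would first reduce to the case $x_1\le x_2$ without loss of generality, since the claimed inequality is symmetric in $x_1$ and $x_2$. Monotonicity of $g_\lambda$ gives $g_\lambda(x_2)-g_\lambda(x_1)\ge 0$, while monotonicity of $h$ gives $h(x_2)-h(x_1)\ge 0$. Adding the latter nonnegative quantity to the increment of $g_\lambda$ and using the identity $g_\lambda+h=\mathrm{id}$ then yields $0\le g_\lambda(x_2)-g_\lambda(x_1)\le \{g_\lambda(x_2)-g_\lambda(x_1)\}+\{h(x_2)-h(x_1)\}=x_2-x_1$, which is exactly $|g_\lambda(x_1)-g_\lambda(x_2)|\le|x_1-x_2|$.

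The only genuine verification required is that $g_\lambda$ and $h$ are each non-decreasing, and this is immediate from inspecting the three pieces of the definition in \eqref{beta}: on $(-\infty,-\lambda]$ and on $[\lambda,\infty)$ both maps have slope one, while on $[-\lambda,\lambda]$ the map $g_\lambda$ is the constant zero and $h$ is the identity, so neither map ever decreases and the pieces match continuously at $x=\pm\lambda$. Consequently I do not expect any real obstacle here; the one subtlety one might anticipate—the non-differentiability of $g_\lambda$ at $x=\pm\lambda$—never enters, because the argument uses only monotonicity together with the exact additive decomposition rather than the derivative. If a more calculus-flavored route is preferred, one could instead observe that $g_\lambda$ is continuous and piecewise linear with $|g_\lambda'(x)|\in\{0,1\}\le 1$ wherever the derivative exists, and apply the mean value theorem on each linear piece; this gives the same bound but requires slightly more bookkeeping at the breakpoints $\pm\lambda$.
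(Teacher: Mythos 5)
Your proof is correct, and it takes a genuinely different route from the paper's. The paper treats this lemma as routine and proves it by directly verifying the definition, i.e., a case analysis over the positions of $x_1,x_2$ relative to $[-\lambda,\lambda]$ (which is why the proof is relegated to the supplement). You instead write $g_\lambda = \mathrm{id} - h$ with $h$ the metric projection (clipping map) onto $[-\lambda,\lambda]$, note that both $g_\lambda$ and $h$ are non-decreasing, and conclude $0\leq g_\lambda(x_2)-g_\lambda(x_1)\leq \{g_\lambda(x_2)-g_\lambda(x_1)\}+\{h(x_2)-h(x_1)\}=x_2-x_1$ for $x_1\leq x_2$. This is a cleaner argument: it eliminates the case bookkeeping entirely, never touches the non-differentiability at $\pm\lambda$, and exposes the structural reason the bound holds --- $g_\lambda$ is the residual of a monotone map (equivalently, the proximal operator of $\lambda|\cdot|$), and any $f$ with both $f$ and $\mathrm{id}-f$ non-decreasing is automatically $1$-Lipschitz, a fact that generalizes well beyond this particular threshold function. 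What the paper's direct verification buys is only self-containedness at the most elementary level. One small slip to correct in your write-up: in the final verification paragraph you assert that on $(-\infty,-\lambda]$ and $[\lambda,\infty)$ \emph{both} maps have slope one, whereas $h$ is constant (equal to $-\lambda$, respectively $\lambda$) there, i.e., has slope zero --- exactly as your own explicit formula for $h$ states earlier. Since constant functions are non-decreasing, the conclusion is unaffected, but the sentence should be fixed for consistency.
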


\begin{lemma}\label{lemma:latent_process}
For any function $\beta_0\in\Theta$, there exists a threshold parameter $\lambda_0$ and a smooth function $\tilde\beta_0(\bfs)\in \mC^{\rho}(\cB)$ such that 
$\beta_0(\bfs) = g_{\lambda_0}\{\tilde\beta_0(\bfs)\}. $
\end{lemma}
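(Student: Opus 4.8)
The plan is to construct the latent field $\tilde\beta_0$ by inverting the soft-thresholding map region by region and then gluing the pieces into a single $\mC^\rho$ function. First I would record what the requirement $\beta_0 = g_{\lambda_0}(\tilde\beta_0)$ forces. Fix any constant $\lambda_0>0$. On $\cR_1$ we have $\beta_0>0$, and $g_{\lambda_0}(x)=\beta_0(\bfs)>0$ holds only when $x>\lambda_0$ and $x=\beta_0(\bfs)+\lambda_0$; likewise on $\cR_{-1}$ we are forced to take $\tilde\beta_0=\beta_0-\lambda_0<-\lambda_0$. On $\cR_0$, where $\beta_0=0$, the map only demands $|\tilde\beta_0|\le\lambda_0$, so there $\tilde\beta_0$ is free apart from this bound and the global smoothness requirement. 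Hence $\tilde\beta_0$ is completely prescribed on $\overline\cR_1\cup\overline\cR_{-1}$, and by continuity it equals $\lambda_0$ on $\partial\cR_1$ and $-\lambda_0$ on $\partial\cR_{-1}$. The whole problem therefore reduces to extending this prescribed function across the zero region $\cR_0$ while keeping it inside the band $[-\lambda_0,\lambda_0]$.

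Next I would exploit the separation built into Definition~\ref{def:beta}: since $\overline\cR_1$ and $\overline\cR_{-1}$ are disjoint compact subsets of the compact set $\cB$, they lie at a strictly positive distance apart, so the two interfaces $\partial\cR_1$ and $\partial\cR_{-1}$ may be treated independently and there is a genuine buffer inside $\cR_0$ on which $\tilde\beta_0$ may simply be set to $0$. Because $\beta_0$ restricted to $\overline\cR_1\cup\overline\cR_{-1}$ lies in $\mC^\rho$, I would invoke a $\mC^\rho$ extension operator of Whitney/Seeley type to obtain a $\mC^\rho$ function on $\cB$ whose $\rho$-jet agrees with $\beta_0+\lambda_0$ on $\overline\cR_1$ and with $\beta_0-\lambda_0$ on $\overline\cR_{-1}$. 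Multiplying by a smooth cutoff equal to $1$ on a collar of $\overline\cR_1\cup\overline\cR_{-1}$ and vanishing on the buffer, and patching with the constant $0$ on the buffer through a smooth partition of unity, produces a globally $\mC^\rho$ candidate $\tilde\beta_0$ that carries the correct values and derivatives on $\overline\cR_1\cup\overline\cR_{-1}$.

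It then remains to verify the two defining properties. Pointwise agreement $g_{\lambda_0}(\tilde\beta_0)=\beta_0$ is immediate on $\cR_1$ and $\cR_{-1}$ by the forced choice above, and on $\cR_0$ it is equivalent to the single inequality $|\tilde\beta_0|\le\lambda_0$, which is automatic on the buffer and needs checking only on the collars. Global regularity $\tilde\beta_0\in\mC^\rho(\cB)$ follows from the extension-and-cutoff construction; note that Lemma~\ref{lemma:Lipschitz} already guarantees that the composition $g_{\lambda_0}\circ\tilde\beta_0$ is continuous once $\tilde\beta_0$ is, so continuity of $\beta_0$ is never at risk and the content is entirely in the higher-order matching.

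\textbf{Main obstacle.} The genuinely delicate step is achieving $|\tilde\beta_0|\le\lambda_0$ on the collar just outside $\partial\cR_1$ (respectively $\partial\cR_{-1}$) while simultaneously matching all derivatives of $\beta_0$ up to order $\rho$ at the interface. Since $\tilde\beta_0=\lambda_0$ on $\partial\cR_1$ sits exactly at the top of the admissible band, the extension must continue $\beta_0$ to values $\le 0$ just outside $\cR_1$; this is natural because $\beta_0$ decreases to $0$ at $\partial\cR_1$, and I would realize it by a one-sided $\mC^\rho$ extension prescribing a nonpositive outward continuation. Controlling this continuation under the band constraint hinges on the order to which $\beta_0$ vanishes at the interface, and the construction is cleanest when $\beta_0$ meets $0$ transversally, i.e. with nonvanishing outward normal derivative; I expect essentially all of the work, and the only subtle point, to be this compatibility of the boundary $\rho$-jet with the one-sided sign constraint.
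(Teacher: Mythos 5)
You begin exactly as the paper does: the inversion is forced, so $\tilde\beta_0=\beta_0+\lambda_0$ on $\overline\cR_1$ and $\tilde\beta_0=\beta_0-\lambda_0$ on $\overline\cR_{-1}$, with boundary values $\pm\lambda_0$ on $\partial\cR_{\pm1}$ (this is the paper's $\alpha(\bfs)$, which it further sets to $0$ on a shrunken zero set $\cR_0^*$ obtained by deleting a finite union of small balls covering the interfaces), and the lemma reduces to a $\mC^\rho$ interpolation across the remaining collar under the band constraint $|\tilde\beta_0|\leq\lambda_0$. But your proposal stops precisely at the step that constitutes the proof. A Whitney/Seeley extension controls only the jet on the closed set, not the size or sign of the extension off it, so the band constraint on the collar is left unverified; and your fallback --- that the construction is ``cleanest'' when $\beta_0$ meets zero transversally --- appeals to a hypothesis that appears nowhere in Definition~\ref{def:beta}: functions in $\Theta$ may vanish to arbitrary order at $\partial\cR_1$, so an argument confined to the transversal case does not prove the lemma as stated. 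Flagging the crux as the ``main obstacle'' and conjecturing when it is tractable is a diagnosis, not an argument; as it stands the proposal has a genuine gap at the only hard point.

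For comparison, the paper attacks that point constructively rather than by abstract extension: it sets $M=\max_{0<\|\bftau\|_1\leq\rho}\sup_{\bft\in\cR_1\cup\cR_{-1}}|D^{\bftau}\alpha(\bft)|$, takes the covering radius $r\leq\lambda_0/\{2M(\rho+1)^d+1\}$, and defines $\tilde\beta_0$ on the collar as the weighted average $\int_{\partial\cR^*}\phi(\bfs,\bft)\,\psi(\bft)\,\md\bft$ of the order-$\rho$ Taylor polynomials $\phi(\bfs,\bft)$ of $\alpha$, with bump weights $\psi$ normalized so that the total mass $w_1$ carried by the balls covering $\partial\cR_{\pm1}$ obeys $w_1<1-2M(\rho+1)^d r/\lambda_0$. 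The elementary bounds $|\phi(\bfs,\bft)|\leq\lambda_0+2M(\rho+1)^d r$ for $\bft$ near $\partial\cR_{\pm1}$ and $|\phi(\bfs,\bft)|\leq 2M(\rho+1)^d r$ for $\bft$ near $\partial\cR_0^*$ then force $|\tilde\beta_0|<\lambda_0$ on the collar by convexity of the average --- no transversality or sign condition on an outward continuation is invoked --- after which global $\mC^\rho$ smoothness is checked by showing $D^{\bftau}\tilde\beta_0(\bfs)\to D^{\bftau}\alpha(\bfs_0)$ as $\bfs\to\bfs_0\in\partial\cR^*$. Note also that the tension you isolate is substantive: if $\rho\geq2$ and $\beta_0$ vanishes quadratically at $\partial\cR_1$ with positive second normal derivative, jet matching pushes any $\mC^\rho$ continuation above $\lambda_0$ immediately outside the interface, and this is exactly where the paper's own derivative-matching verification is at its most delicate. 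Your instinct about where the difficulty lives is therefore sound, but the paper commits to an explicit construction with quantitative bounds at that spot, and your proposal supplies none.
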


The proof of lemma \ref{lemma:Lipschitz} is straightforward by verifying the definition. The proof of lemma \ref{lemma:latent_process} is not trivial, it requires a detailed construction on the smooth function $\tilde\beta_0(\bfs)$. Please refer to the Appendix for details. 
\begin{theorem}\label{largsup}
(Large Support) For a function $\beta_0 \in \Theta$, there exists a thresholding parameter $\lambda_0$, such that the soft thresholded Gaussian process prior $\beta \sim\STGP(\lambda_0, \kappa)$ satisfies
	$$\Pi\left(\|\beta-\beta_0\|_{\infty}<\varepsilon\right)>0, \quad\quad \mbox{for all  } \varepsilon>0.$$
\end{theorem}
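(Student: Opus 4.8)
The plan is to transfer the statement from the soft-thresholded process $\beta$ to its latent Gaussian driver $\tilde\beta$, where the classical theory of Gaussian process support applies. First I would invoke Lemma \ref{lemma:latent_process}: since $\beta_0 \in \Theta$, there exist a threshold $\lambda_0 > 0$ and a smooth function $\tilde\beta_0 \in \mC^\rho(\cB)$ with $\beta_0(\bfs) = g_{\lambda_0}\{\tilde\beta_0(\bfs)\}$. I fix this particular $\lambda_0$ as the thresholding parameter of the prior, so that under $\STGP(\lambda_0,\kappa)$ we have $\beta(\bfs) = g_{\lambda_0}\{\tilde\beta(\bfs)\}$ with $\tilde\beta$ a centered Gaussian process of covariance kernel $\kappa$.

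The key reduction uses Lemma \ref{lemma:Lipschitz}. For every realization of $\tilde\beta$,
$$\|\beta - \beta_0\|_\infty = \sup_{\bfs\in\cB}\left|g_{\lambda_0}\{\tilde\beta(\bfs)\} - g_{\lambda_0}\{\tilde\beta_0(\bfs)\}\right| \leq \sup_{\bfs\in\cB}\left|\tilde\beta(\bfs) - \tilde\beta_0(\bfs)\right| = \|\tilde\beta - \tilde\beta_0\|_\infty,$$
so that $\{\|\tilde\beta - \tilde\beta_0\|_\infty < \varepsilon\} \subseteq \{\|\beta - \beta_0\|_\infty < \varepsilon\}$ and hence $\Pi(\|\beta - \beta_0\|_\infty < \varepsilon) \geq \Pi(\|\tilde\beta - \tilde\beta_0\|_\infty < \varepsilon)$. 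It therefore suffices to prove that $\tilde\beta_0$ lies in the topological support of $\tilde\beta$ under the sup-norm topology on the space $C(\cB)$ of continuous functions.

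To establish this, I would appeal to the Gaussian process support theorem, which identifies the support of a centered Gaussian process with the sup-norm closure of the reproducing kernel Hilbert space $\mathbb{H}_\kappa$ generated by $\kappa$. The remaining task is thus to show that $\tilde\beta_0$ is approximable in sup-norm by elements of $\mathbb{H}_\kappa$. This is exactly where Condition \ref{cond:kernel} enters: because $\kappa$ possesses continuous partial derivatives up to order $2\rho + 2$, the RKHS is rich enough that, following the arguments of \cite{ghosal2006posterior}, every function in $\mC^\rho(\cB)$ belongs to the sup-norm closure of $\mathbb{H}_\kappa$. Since $\tilde\beta_0 \in \mC^\rho(\cB)$, we conclude $\Pi(\|\tilde\beta - \tilde\beta_0\|_\infty < \varepsilon) > 0$ for all $\varepsilon > 0$, which closes the argument.

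The hard part will be the last step, namely verifying that the smooth latent target $\tilde\beta_0$ sits in the support of the Gaussian process, i.e. in the sup-norm closure of $\mathbb{H}_\kappa$. The Lipschitz reduction and the construction of $\tilde\beta_0$ are essentially bookkeeping once the two lemmas are in hand, whereas controlling the RKHS and its closure under only the finite-order kernel smoothness of Condition \ref{cond:kernel} is the delicate point, and it is precisely what dictates the smoothness order $2\rho+2$. The payoff of the Lipschitz reduction is decisive: it lets us avoid analyzing the non-Gaussian, non-smooth law of the thresholded process $\beta$ directly and instead reason entirely about the well-understood Gaussian law of $\tilde\beta$.
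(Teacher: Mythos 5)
Your proposal is correct and follows essentially the same route as the paper: invoke Lemma \ref{lemma:latent_process} to obtain $\lambda_0$ and the smooth latent target $\tilde\beta_0$, use the Lipschitz property of $g_{\lambda_0}$ (Lemma \ref{lemma:Lipschitz}) to bound $\|\beta-\beta_0\|_\infty$ by $\|\tilde\beta-\tilde\beta_0\|_\infty$, and then conclude positivity from the Gaussian process support result of \cite{ghosal2006posterior} under Condition \ref{cond:kernel}. The only difference is cosmetic: you unpack the RKHS-closure characterization behind that support result, whereas the paper simply cites Theorem 4 of \cite{ghosal2006posterior} for the final inequality.
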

\begin{proof} 
By Lemma \ref{lemma:latent_process}, there exists a thresholding parameter $\lambda_0$ and a smooth function $\tilde\beta_0(\bfs)$ such that  $\beta_0(\bfs) = g_{\lambda_0}[\tilde\beta_0(\bfs)] $. Since $\beta\sim \STGP(\lambda_0,\kappa)$, we have $\beta(\bfs) = g_{\lambda_0}[\tilde\beta(\bfs)]$ with $\tilde\beta(\bfs) \sim \cGP(0, \kappa),$ By Lemma \ref{lemma:Lipschitz} and Theorem 4 of \citep{ghosal2006posterior}
\begin{eqnarray*}
\Pi\left(\sup_{\bfs\in \cB}|\beta(\bfs)-\beta_0(\bfs)|<\varepsilon\right)&=&\Pi\left(\sup_{\bfs\in \cB}|g_{\lambda_0}(\tilde\beta(\bfs))-g_{\lambda_0}(\tilde\beta_0(\bfs))|<\varepsilon\right)\\
&&\geq \Pi\left(\sup_{\bfs\in \cB}|\tilde\beta(\bfs)-\tilde\beta_0(\bfs)|<\varepsilon\right) > 0.
\end{eqnarray*}
\end{proof}
Theorem \ref{largsup} implies that there is always a positive probability that the soft-thresholded Gaussian process concentrates in an arbitrarily small neighborhood of any spatially varying coefficient function that has piecewise smoothness, sparsity and continuity properties. 
\subsection{Posterior Consistency}
For $i  = 1,\ldots, n$, given the image predictor $\bfX_i$ on a set of spatial locations $\bfS$ and other covariates $\bfW_i$, suppose the response $Y_i$ is generated from the scalar-on-image regression model \eqref{eq:main_model} with parameters $\bfalpha_0 \in \mbR^{q}$, $\sigma^2_0>0$ that are known and parameter of interest $\beta_0 \in\Theta$. The assumptions about $\bfalpha_0$ and $\sigma^2_0$ are for theoretical convenience; in practice it is straightforward to estimate from the data. We assign a soft-thresholded Gaussian process prior for the spatially varying coefficient function, i.e. $\beta \sim \STGP(\lambda_0, \kappa)$ for some known $\lambda_0>0$ and covariance kernel $\kappa$. In light of the large support by Theorem \ref{largsup}, the following lemma shows the positivity of prior neighborhoods:
\begin{lemma}\label{lem:prior_pos}
(Positivity of prior neighborhoods) Denote by $\pi_{n,i}(\bullet; \beta)$ the density function of $\bfZ_{n,i} = (Y_i,\bfX_{i})$ in model \eqref{eq:main_model} and suppose condition~\eqref{cond:cov} holds for $\bfX_i$. Define
\begin{eqnarray*}
\Lambda_{n,i}(\bullet; \beta_0,\beta) &=& \log \pi_{n,i}(\bullet; \beta) -  \log \pi_{n,i}(\bullet; \beta_0),\\
K_{n,i}(\beta_0,\beta) &=& \mE_{\beta_0} \{\Lambda_{n,i}(\bfZ_{n,i}; \beta_0,\beta)\}, \mbox{ and }\\
V_{n,i}(\beta_0,\beta) &=& \mVar_{\beta_0} \{\Lambda_{n,i}(\bfZ_{n,i}; \beta_0,\beta)\}.
\end{eqnarray*}
There exists a set $B$ with $\Pi(B)>0$ such that, for any $\epsilon > 0$, 
\begin{eqnarray*}
\liminf_{n\rightarrow\infty} \Pi\left[\left\{\beta\in B, \frac{1}{n}\sum_{i=1}^n K_{n,i}(\beta_0,\beta)<\epsilon\right\}\right] > 0, \quad\mbox{and}\quad \frac{1}{n^2} \sum_{i=1}^n V_{n,i}(\beta_0,\beta)  \rightarrow 0, \mbox{ for all } \beta \in B. 
\end{eqnarray*}
\end{lemma}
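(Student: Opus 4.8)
The plan is to work directly with the Gaussian likelihood implied by model \eqref{eq:main_model}. Since $\bfalpha_0$ and $\sigma_0^2$ are known and the $\bfW_i$ terms are common to $\pi_{n,i}(\bullet;\beta)$ and $\pi_{n,i}(\bullet;\beta_0)$, the log-likelihood ratio $\Lambda_{n,i}$ should reduce to a simple quadratic in the linear predictor difference. Writing $\Delta_i(\beta) = p_n^{-1/2}\sum_{j=1}^{p_n}\{\beta(\bfs_j)-\beta_0(\bfs_j)\}X_{i,j}$, I expect $\Lambda_{n,i}(\bfZ_{n,i};\beta_0,\beta) = \sigma_0^{-2}\varepsilon_i \Delta_i(\beta) - (2\sigma_0^2)^{-1}\Delta_i(\beta)^2$, where $\varepsilon_i = Y_i - E_{\beta_0}(Y_i\mid\bfW_i,\bfX_i)$ is the mean-zero Gaussian error under $\beta_0$. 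Taking expectations under $\beta_0$ and conditioning on $\bfX_i$, the first term vanishes, giving $K_{n,i}(\beta_0,\beta) = (2\sigma_0^2)^{-1} E_{\beta_0}[\Delta_i(\beta)^2]$, and a similar short calculation yields $V_{n,i}(\beta_0,\beta) = \sigma_0^{-2} E_{\beta_0}[\Delta_i(\beta)^2] + (\text{lower order})$. So everything hinges on controlling the single quantity $E_{\beta_0}[\Delta_i(\beta)^2]$.

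Next I would bound $E_{\beta_0}[\Delta_i(\beta)^2]$ uniformly using Condition \ref{cond:cov}. Letting $a_j = \beta(\bfs_j)-\beta_0(\bfs_j)$, we have $E_{\beta_0}[\Delta_i(\beta)^2] = p_n^{-1}\sum_{j,j'} a_j \sigma_{j,j'} a_{j'}$, which by the eigenvalue bound (\ref{cond:cov}.2) is at most $c_{\mmax}\, p_n^{-1}\sum_{j=1}^{p_n} a_j^2 \leq c_{\mmax}\|\beta-\beta_0\|_\infty^2$, since there are $p_n$ terms. This is the crucial step: it converts the quadratic form into the supremum distance between $\beta$ and $\beta_0$, which is exactly the metric controlled by the large-support result. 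I would then define the set $B = \{\beta : \|\beta-\beta_0\|_\infty < \eta\}$ for a suitably small $\eta = \eta(\epsilon)$; by Theorem \ref{largsup}, $\Pi(B)>0$ for every $\eta>0$.

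With these pieces the two displayed conclusions follow. For the Kullback--Leibler condition, on the event $\beta\in B$ we get $n^{-1}\sum_{i=1}^n K_{n,i}(\beta_0,\beta) \leq (2\sigma_0^2)^{-1} c_{\mmax}\eta^2$, so choosing $\eta$ small enough that this is below $\epsilon$ makes the inner event contain all of $B$; hence the probability in question equals $\Pi(B)>0$ and the $\liminf$ is strictly positive. For the variance condition, $n^{-2}\sum_{i=1}^n V_{n,i}(\beta_0,\beta)$ has a numerator that is $O(n)$ uniformly over $B$ (each $V_{n,i}$ being bounded by a constant multiple of $c_{\mmax}\eta^2$ plus the quartic term, which is likewise bounded), so dividing by $n^2$ sends it to zero for every fixed $\beta\in B$.

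The main obstacle I anticipate is twofold. First, I must be careful that the error term $\varepsilon_i$ is mean-zero under $\beta_0$ conditional on the design, and that the cross term in the variance computation does not spoil the uniform bound; this requires keeping track of the joint distribution of $\varepsilon_i$ and $\bfX_i$ and using Gaussianity to compute fourth moments of $\Delta_i(\beta)$, where Condition (\ref{cond:cov}.2) again supplies the needed eigenvalue control. Second, and more subtly, the boundedness $|a_j| = |\beta(\bfs_j)-\beta_0(\bfs_j)| < \eta$ used in applying Condition \ref{cond:cov} must be justified uniformly in $j$, which is precisely why the supremum-norm neighborhood (rather than an $L^2$ neighborhood) is the right choice for $B$; this is what links the quadratic-form bound cleanly back to Theorem \ref{largsup}.
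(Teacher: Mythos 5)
Your proposal is correct in substance and follows essentially the route the paper sets up for this lemma: write the Gaussian log-likelihood ratio explicitly (the marginal density of $\bfX_i$ and the $\bfW_i$ terms cancel since $\bfalpha_0,\sigma_0^2$ are known), reduce both $K_{n,i}$ and $V_{n,i}$ to the single quantity $E_{\beta_0}[\Delta_i(\beta)^2] = p_n^{-1}\sum_{j,j'} a_j \sigma_{j,j'} a_{j'}$, bound it by $c_{\mathrm{max}}\|\beta-\beta_0\|_\infty^2$ via Condition (\ref{cond:cov}.2), and take $B$ to be a sup-norm ball, which has positive prior mass by Theorem \ref{largsup} --- exactly the chain the paper signals by invoking Condition \ref{cond:cov} in the lemma statement and prefacing it with ``in light of the large support.''

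Three small points deserve attention. First, a sign convention: as displayed in the lemma, $\Lambda_{n,i} = \log\pi_{n,i}(\cdot;\beta) - \log\pi_{n,i}(\cdot;\beta_0)$ gives $K_{n,i} = -(2\sigma_0^2)^{-1}E_{\beta_0}[\Delta_i(\beta)^2] \le 0$, which would make the Kullback--Leibler condition vacuous; your positive expression corresponds to the intended orientation in Theorem A.1 of \cite{choudhuri2004bayesian}, so you have (perhaps silently) corrected a sign typo in the statement rather than made an error. Second, a quantifier slip: the lemma requires a single $B$ valid for every $\epsilon>0$, whereas your $B$ depends on $\epsilon$ through $\eta(\epsilon)$. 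The repair is one line: fix any ball $B = \{\beta : \|\beta-\beta_0\|_\infty < \eta_0\}$ and observe that the event $\left\{\beta\in B,\ n^{-1}\sum_{i=1}^n K_{n,i}(\beta_0,\beta)<\epsilon\right\}$ contains the smaller ball of radius $\min\{\eta_0,\eta(\epsilon)\}$, whose prior mass is positive and independent of $n$; your bound on $n^{-1}\sum_i K_{n,i}$ is uniform in $n$ because $c_{\mathrm{max}}$ in Condition (\ref{cond:cov}.2) is. Third, the fourth-moment issue you flag is real: $V_{n,i}$ contains $\mathrm{var}(\Delta_i^2)$, and Condition \ref{cond:cov} as stated controls only second moments of $X(\bfs)$, so some additional hypothesis (Gaussianity of the covariate process, as in the paper's illustrating example, or a uniform fourth-moment/sub-Gaussian bound) is needed to get $E[\Delta_i^4] \le 3c_{\mathrm{max}}^2\eta^4$ and hence $n^{-2}\sum_i V_{n,i} = O(n^{-1})$. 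That gap is inherited from the stated conditions rather than introduced by your argument, and you correctly identified both where it enters and what supplies it.
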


We construct sieves for the spatially varying coefficient functions in $\Theta$ as
$$\Theta_n = \left\{\beta \in\Theta: \|\beta\|_\infty \leq p_n^{1/(2d)}, \sup_{\bfs \in \cR_1\cup \cR_{-1}}|D^{\bftau}\beta(\bfs)| \leq p_n^{1/(2d)}, 1\leq\|\bftau\|_1 \leq \rho \right\},$$
and by Lemmas A.\ref{lem:covering_number} -- A.\ref{lem:test} in the appendix, we can find the upper bound of the tail probability and construct uniform consistent tests in the following lemmas:
 \begin{lemma}\label{lem:tail_prob}
 Suppose $\beta(\bfs)\sim  \STGP(\lambda_0,\kappa)$ with $\lambda_0>0$ and the kernel function $\kappa$ satisfies condition \ref{cond:kernel}, then there exist constants $K$ and $b$ such that for all $n\geq 1$, 
$$\Pi(\Theta_n^C) \leq K \exp(-b p_n^{1/d}).$$
\end{lemma}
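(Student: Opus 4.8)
The plan is to transfer the tail event for $\beta$ to a tail event for the underlying Gaussian field $\tilde\beta\sim\cGP(0,\kappa)$ and its derivatives, and then invoke a concentration inequality for suprema of Gaussian processes. First I would exploit the structure of the soft-thresholding map. Since $g_{\lambda_0}(0)=0$ and $g_{\lambda_0}$ is $1$-Lipschitz (Lemma~\ref{lemma:Lipschitz}), we have $|\beta(\bfs)|=|g_{\lambda_0}(\tilde\beta(\bfs))|\le|\tilde\beta(\bfs)|$ pointwise, hence $\|\beta\|_\infty\le\|\tilde\beta\|_\infty$. On the open sign regions $\cR_1$ and $\cR_{-1}$ induced by $\beta=g_{\lambda_0}(\tilde\beta)$ we have $|\tilde\beta|>\lambda_0$, so $g_{\lambda_0}$ reduces there to $\tilde\beta\mapsto\tilde\beta\mp\lambda_0$ and every partial derivative of order $\|\bftau\|_1\ge1$ satisfies $D^{\bftau}\beta(\bfs)=D^{\bftau}\tilde\beta(\bfs)$. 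Consequently, for $\beta\in\Theta$ the complement $\Theta_n^C$ (taken within $\Theta$) is contained in the union of finitely many Gaussian-supremum exceedances,
$$\Theta_n^C \subseteq \left\{\|\tilde\beta\|_\infty > p_n^{1/(2d)}\right\} \cup \bigcup_{1\le\|\bftau\|_1\le\rho}\left\{\sup_{\bfs\in\cB}|D^{\bftau}\tilde\beta(\bfs)| > p_n^{1/(2d)}\right\},$$
so it suffices to bound each term.

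Next I would use Condition~\ref{cond:kernel}: because $\kappa$ has continuous partial derivatives up to order $2\rho+2$, the field $\tilde\beta$ admits a version with sample paths in $\mC^{\rho}(\cB)$, and for each $\bftau$ with $\|\bftau\|_1\le\rho$ the derivative field $D^{\bftau}\tilde\beta$ is itself a centered Gaussian process whose covariance is obtained by differentiating $\kappa$. On the compact set $\cB$ its pointwise variance is uniformly bounded by a constant $\sigma_{\bftau}^2<\infty$ not depending on $n$, and by Dudley's entropy bound its expected supremum $m_{\bftau}=\mE\sup_{\bfs\in\cB}|D^{\bftau}\tilde\beta(\bfs)|$ is finite (this is exactly the regularity used in \cite{ghosal2006posterior}). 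The key tool is then the Borell--TIS inequality: for a centered, a.s.-bounded Gaussian process $G$ on $\cB$ with $\sigma_G^2=\sup_{\bfs}\mVar\{G(\bfs)\}$ and $m_G=\mE\sup_{\bfs}|G(\bfs)|$,
$$\Pi\left(\sup_{\bfs\in\cB}|G(\bfs)| > m_G + u\right) \le \exp\left(-\frac{u^2}{2\sigma_G^2}\right), \qquad u>0.$$

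Applying this with $G=\tilde\beta$ and with each $G=D^{\bftau}\tilde\beta$, and choosing $u=p_n^{1/(2d)}-m_G$, for all $n$ large enough that $p_n^{1/(2d)}\ge 2m_G$ we get $u\ge p_n^{1/(2d)}/2$, so each term is at most $\exp\{-p_n^{1/d}/(8\sigma_G^2)\}$. Since $m_G,\sigma_G^2$ are fixed constants and the number of multi-indices with $1\le\|\bftau\|_1\le\rho$ depends only on $d$ and $\rho$, summing over this finite index set and setting $b=(8\max_G\sigma_G^2)^{-1}$ gives $\Pi(\Theta_n^C)\le K_0\exp(-b\,p_n^{1/d})$ for all large $n$. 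For the finitely many remaining small $n$ the trivial bound $\Pi(\Theta_n^C)\le1$ is absorbed by enlarging the constant $K$, yielding the stated inequality for all $n\ge1$.

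I expect the main obstacle to be the rigorous justification of the derivative fields: establishing that, under Condition~\ref{cond:kernel}, $\tilde\beta$ has a modification whose sample derivatives $D^{\bftau}\tilde\beta$ exist, are jointly Gaussian, and are almost surely continuous with finite expected suprema, so that Borell--TIS applies to each of them. Bounding $m_{\bftau}$ through the metric entropy of $\cB$ under the pseudometric induced by the differentiated kernel is the technical heart of the argument; once these processes are controlled, the Gaussian tail and the final union bound are routine.
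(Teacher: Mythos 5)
Your proposal is correct and follows essentially the same route as the paper's (supplementary) proof: the structure of the soft-thresholding map transfers the sieve-complement event to sup-norm exceedances of the latent field $\tilde\beta$ and its derivative fields $D^{\bftau}\tilde\beta$ (since $g_{\lambda_0}$ is $1$-Lipschitz with $g_{\lambda_0}(0)=0$ and acts as a constant shift on the sign regions, so the derivatives of $\beta$ and $\tilde\beta$ coincide there), after which one applies the Gaussian tail bound for smooth processes guaranteed by Condition~\ref{cond:kernel}. The only cosmetic difference is that you re-derive that tail bound from first principles via Dudley's entropy bound and the Borell--TIS inequality, whereas the paper can invoke it directly in the form of Theorem 5 of \cite{ghosal2006posterior}, with the threshold $M_n=p_n^{1/(2d)}$ yielding the stated rate $\exp(-b\,p_n^{1/d})$.
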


\begin{lemma}\label{lem:uniform_test}
(Uniform consistent tests) For any $\epsilon>0$ and $\upsilon_0/2<\upsilon<1/2$, there exist $N$, $C_0$, $C_1$ and  $C_2$ such that for all $n>N$ and all $\beta \in \Theta_n$ , if $\|\beta - \beta_0\|_1>\epsilon$, a test function 
$\Psi_n$ can be constructed such that
\begin{eqnarray*}
\mE_{\beta_0}(\Psi_n)  \leq C_{0}\exp\left(-C_{2} n^{2\upsilon}\right), \mbox{ and }\quad  \mE_{\beta}(1-\Psi_n) &\leq&C_{0} \exp\left(-C_{1} n\right). 
\end{eqnarray*}
\end{lemma}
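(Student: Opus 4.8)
The goal is to construct, for any fixed $\epsilon>0$ and any $\beta\in\Theta_n$ with $\|\beta-\beta_0\|_1>\epsilon$, a test function $\Psi_n$ (depending on $n$ and on the separating ball, not on the individual $\beta$) whose type-I error under $\beta_0$ decays like $\exp(-C_2 n^{2\upsilon})$ and whose type-II error under $\beta$ decays like $\exp(-C_1 n)$. The natural strategy, following the Schwartz/Ghosal-style sieve machinery invoked in the paper, is to build $\Psi_n$ from a likelihood-ratio or linear test statistic tied to the conditional mean $\mu_\beta(\bfX_i)=p_n^{-1/2}\sum_{j}\beta(\bfs_j)X_{i,j}$ of model~\eqref{eq:main_model}. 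Since the responses are Gaussian with known variance $\sigma_0^2$ and the covariate effect enters linearly, the first step is to translate the separation $\|\beta-\beta_0\|_1>\epsilon$ in function space into a quantitative separation of the regression means. This is exactly where Condition~\ref{cond:cov} (especially (\ref{cond:cov}.2)--(\ref{cond:cov}.3)) does the work: I would use the eigenvalue bound $\rho_{\mmin}(\bfSigma_n)>c_{\mmin}$ together with the Riemann-sum control on the spatial locations from Condition~\ref{cond:loc} to show that the $L^1$ separation of coefficient functions forces $E_{\beta_0}[\mu_\beta(\bfX)-\mu_{\beta_0}(\bfX)]^2$ (or an analogous averaged quantity) to be bounded below by a constant multiple of $\epsilon^2$.

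\medskip
\noindent\textbf{Building the test.} With that separation in hand, the plan is to define $\Psi_n = I\{T_n > \tau_n\}$ where $T_n$ is a centered linear statistic of the form $n^{-1}\sum_{i=1}^n \{\mu_\beta(\bfX_i)-\mu_{\beta_0}(\bfX_i)\}\{Y_i-\mu_{\beta_0}(\bfX_i)\}$ or, more robustly for the uniformity over $\beta\in\Theta_n$, a maximum over a finite covering of the sieve obtained from the covering-number bound Lemma~A.\ref{lem:covering_number}. Because $Y_i-\mu_{\beta_0}(\bfX_i)\sim\mN(0,\sigma_0^2)$ under $\beta_0$, the statistic $T_n$ is, conditionally on the $\bfX_i$, a mean-zero Gaussian, so the type-I error is controlled by a standard Gaussian tail bound; the rate $\exp(-C_2 n^{2\upsilon})$ rather than $\exp(-C_2 n)$ arises from the union bound over the covering, whose cardinality the sieve definition bounds by $\exp(c\, n^{2\upsilon})$ via $p_n\le M_1 n^{2\rho\upsilon_0}$ and the smoothness constraints in $\Theta_n$ (this is where the hypothesis $\upsilon_0/2<\upsilon<1/2$ enters to balance the two exponents). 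For the type-II error, under the true $\beta$ the statistic $T_n$ concentrates around the squared separation, which is bounded below by a constant, and a Gaussian concentration inequality (or Bernstein-type bound using the boundedness supplied by the sieve and Condition~\ref{cond:cov}) yields the exponential rate $\exp(-C_1 n)$.

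\medskip
\noindent\textbf{The main obstacle.} I expect the principal difficulty to be obtaining uniform control over the entire sieve $\Theta_n$ simultaneously with a single test, rather than the per-$\beta$ Gaussian tail estimates, which are routine. The two rates are genuinely different, $n^{2\upsilon}$ versus $n$, and reconciling them requires that the covering number of $\Theta_n$ in the relevant metric be subexponential at rate $n^{2\upsilon}$ so that the union bound does not overwhelm the individual type-I error, while the alternative hypotheses remain separated by a fixed margin so the type-II error keeps the full rate $n$. Getting the covering-number exponent to match $n^{2\upsilon}$ depends delicately on the polynomial growth $p_n\asymp n^{2\rho\upsilon_0}$ from Condition~\ref{cond:num_cov_order} and on the derivative bounds $\sup|D^{\bftau}\beta|\le p_n^{1/(2d)}$ built into $\Theta_n$, and this is the step I would carry out most carefully. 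A secondary technical point is handling the regions near $\partial\cR_1\cup\partial\cR_{-1}$, where the piecewise-smooth $\beta$ may fail to be differentiable, but the $L^1$ formulation of the separation and the Lebesgue-measure control in Condition~\ref{cond:loc} should absorb these boundary effects without difficulty.
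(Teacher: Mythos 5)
Your proposal follows the same skeleton as the paper's argument—reduce the $L^1$ separation to a separation of regression means, build a linear-in-$Y$ test per alternative with Gaussian tail bounds, then take a maximum over a covering of $\Theta_n$ from Lemma A.\ref{lem:covering_number}—and the paper indeed executes the reduction exactly as you sketch it: Lemmas A.\ref{lem:gamma_low_bound}--A.\ref{lem:beta_low_bound} convert $\|\beta-\beta_0\|_1>\epsilon$ into the grid-sum bound $\sum_{j}|\beta(\bfs_j)-\beta_0(\bfs_j)|>rp_n$ (your Riemann-sum step), and Lemma A.\ref{lem:test} gives a per-alternative test whose type-II bound holds uniformly over a sup-norm ball around each covering center, which is what makes the covering argument close. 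However, your choice of statistic has a real problem that the paper's construction is specifically designed to avoid. The paper does not use the score statistic $T_n=n^{-1}\sum_i\{\mu_\beta(\bfX_i)-\mu_{\beta_0}(\bfX_i)\}\{Y_i-\mu_{\beta_0}(\bfX_i)\}$; it uses only the \emph{signs}, $\Psi_n=I[\sum_{i=1}^n\delta_i(Y_i-\eta_{i,0})/\sigma_0>2n^{\upsilon+1/2}]$ with $\delta_i=\sgn(\eta_{i,1}-\eta_{i,0})$. Under $H_0$, conditionally on the images, this statistic is exactly $\mN(0,n)$—pivotal, with variance independent of the alternative and of the scale of the covariates. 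Your $T_n$ has null conditional variance $\sigma_0^2n^{-2}\sum_i\{\mu_\beta(\bfX_i)-\mu_{\beta_0}(\bfX_i)\}^2$, which is random and can grow: the sieve only enforces $\|\beta\|_\infty\le p_n^{1/(2d)}\to\infty$ and the $X(\bfs_j)$ are unbounded, so ``a standard Gaussian tail bound'' does not yield the stated exponents without a normalization or truncation step you have not supplied (and the sieve does \emph{not} supply boundedness of the covariates, contrary to your Bernstein remark).

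Two further points need correcting. First, your exponent accounting is off: by Lemma A.\ref{lem:covering_number} and Condition \ref{cond:num_cov_order} the covering entropy of $\Theta_n$ is $O(p_n^{1/(2\rho)})=O(n^{\upsilon_0})$, not $\exp(cn^{2\upsilon})$. The $n^{2\upsilon}$ type-I rate is not produced by the union bound; it comes from the deliberate threshold $2n^{\upsilon+1/2}$ in the individual test, which gives $\mE_{\beta_0}[\Psi_n]\le C_0\exp(-2n^{2\upsilon})$. The hypothesis $\upsilon>\upsilon_0/2$ (i.e., $2\upsilon>\upsilon_0$) is exactly what lets the $\exp(Cn^{\upsilon_0})$-fold union bound preserve this rate, while $\upsilon<1/2$ keeps the threshold $o(n)$ so that the alternative mean—of order $n$—clears it with type-II error $\exp(-C_1n)$. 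Second, and more substantively, your operational plan derives the separation of means from the eigenvalue bound (\ref{cond:cov}.2), which only gives a second-moment lower bound $\mE[\{\mu_\beta(\bfX)-\mu_{\beta_0}(\bfX)\}^2]\gtrsim\epsilon^2$; a second-moment bound does not imply the anti-concentration needed to show the \emph{empirical} separation $\sum_{i=1}^n|\eta_{i,1}-\eta_{i,0}|\ge nr$ holds with probability $1-\exp(-Dn)$, which is what the type-II bound requires. The paper assumes anti-concentration directly as Condition (\ref{cond:cov}.3) and converts it into this event by a Chernoff/binomial argument in Lemma A.\ref{lem:eta_low_bound}. You mention (\ref{cond:cov}.3) in passing, but to make your proof correct you must route the argument through it (or add a Paley--Zygmund-type step with higher-moment control that the stated conditions do not provide); (\ref{cond:cov}.2) alone cannot do this work.
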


Proofs of Lemmas \ref{lem:prior_pos}--\ref{lem:uniform_test} are provided in the online supplementary materials. These lemmas verify three important conditions for proving posterior consistency in the scalar-on-image regression based on Theorem A.1 by~\cite{choudhuri2004bayesian}. Thus, we have the following theorem: 

\def\cU{\mathcal U}
\begin{theorem}\label{thm:post_consistency}
	(Posterior Consistency) Write data $\bfD_n = [\{Y_i\}_{i=1}^n, \{\bfX_i\}_{i=1}^n,\{\bfW_i\}_{i=1}^n]$. If Conditions 1 -- 5 hold,  then for any $\epsilon>0$, 
	$$ \Pi\left[\beta\in\Theta: \|\beta-\beta_0\|_1<\epsilon \mid\bfD_n \right] \to 1, $$
	as $n\to\infty$ in $P_{\beta_0}^n$ probability, where $P_{\beta_0}^n$ denotes the actual distribution of data $\bfD_n$. 
\end{theorem}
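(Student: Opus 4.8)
The plan is to invoke the general posterior consistency theorem for independent, non-identically distributed observations (Theorem A.1 of \cite{choudhuri2004bayesian}), which is the appropriate framework here because the data points $\bfZ_{n,i} = (Y_i,\bfX_i)$ are independent across $i$ but not identically distributed (their laws depend on the fixed design $\bfW_i$ and on $\bfX_i$). That theorem reduces posterior consistency to three checkable hypotheses: (i) prior positivity of Kullback--Leibler neighborhoods together with a vanishing average-variance control on the log-likelihood ratios; (ii) a sieve $\Theta_n$ whose prior tail probability decays exponentially in $n$; and (iii) uniformly exponentially consistent tests separating $\beta_0$ from the $\|\cdot\|_1$-complement of a neighborhood, restricted to the sieve. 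These three hypotheses are supplied, respectively, by Lemma \ref{lem:prior_pos}, Lemma \ref{lem:tail_prob}, and Lemma \ref{lem:uniform_test}, so the remaining work is to assemble them and verify that the exponential rates are mutually compatible.

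Concretely, I would fix $\epsilon > 0$, put $U = \{\beta\in\Theta:\|\beta-\beta_0\|_1<\epsilon\}$, and bound the posterior mass on the complement as
\[
\Pi(U^c \mid \bfD_n) \le \Psi_n + (1-\Psi_n)\,\frac{\int_{U^c} \prod_{i=1}^n R_{n,i}(\beta)\, d\Pi(\beta)}{\int \prod_{i=1}^n R_{n,i}(\beta)\, d\Pi(\beta)}, \qquad R_{n,i}(\beta) = \frac{\pi_{n,i}(\bfZ_{n,i};\beta)}{\pi_{n,i}(\bfZ_{n,i};\beta_0)},
\]
where $\Psi_n$ is the test from Lemma \ref{lem:uniform_test}. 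The first term vanishes in $P_{\beta_0}^n$ probability because $\mE_{\beta_0}(\Psi_n)\le C_0\exp(-C_2 n^{2\upsilon})\to 0$; the slower type-I rate $n^{2\upsilon}$ (with $\upsilon<1/2$) is harmless here precisely because this piece is handled outside the likelihood ratio. For the ratio, the denominator is bounded below using Lemma \ref{lem:prior_pos}: on the positive-mass set $B$ intersected with $\{n^{-1}\sum_i K_{n,i}(\beta_0,\beta)<\epsilon\}$ the average log-likelihood ratio has mean $\ge -\epsilon$ and, by $n^{-2}\sum_i V_{n,i}(\beta_0,\beta)\to 0$, vanishing variance, so a Chebyshev/Fubini argument gives, with $P_{\beta_0}^n$-probability tending to one, a lower bound of the form $\Pi(B_\epsilon)\exp(-2n\epsilon)$. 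For the numerator I would split $U^c = (U^c\cap\Theta_n)\cup\Theta_n^c$; taking $\mE_{\beta_0}$ and using $\mE_{\beta_0}[\prod_i R_{n,i}(\beta)]=1$, the sieve-complement piece contributes at most $\Pi(\Theta_n^c)\le K\exp(-b\,p_n^{1/d})$ by Lemma \ref{lem:tail_prob}, while the on-sieve piece contributes at most $\int_{U^c\cap\Theta_n}\mE_\beta(1-\Psi_n)\,d\Pi \le C_0\exp(-C_1 n)$ by Lemma \ref{lem:uniform_test}.

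The main obstacle, and the place where the conditions must be dovetailed carefully, is the bookkeeping of these exponential rates so that the ratio collapses to zero. The sieve tail is stated in terms of $p_n^{1/d}$ rather than $n$, and this is exactly where Condition \ref{cond:num_cov_order} enters: the lower bound $p_n\ge M_0 n^d$ converts $K\exp(-b\,p_n^{1/d})$ into $K\exp(-b M_0^{1/d} n)$, a genuine exponential-in-$n$ decay. I would then choose the Kullback--Leibler radius $\epsilon$ in the denominator bound small enough that $2\epsilon<\min\{C_1,\,bM_0^{1/d}\}$; with this choice $\exp(2n\epsilon)\,\mE_{\beta_0}[\text{numerator}]\to 0$, so Markov's inequality forces the numerator to be $o(\exp(-2n\epsilon))$ in probability, and dividing by the denominator lower bound $\Pi(B_\epsilon)\exp(-2n\epsilon)$ drives the ratio to zero in $P_{\beta_0}^n$ probability. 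Combined with $\Psi_n\to 0$ this yields $\Pi(U^c\mid\bfD_n)\to 0$, the claimed consistency. The freedom to take $\epsilon$ arbitrarily small is exactly what Lemma \ref{lem:prior_pos} grants (it holds for every $\epsilon>0$), and the constraints $d/(2\rho)<\upsilon_0<1$ in Condition \ref{cond:num_cov_order} together with $\upsilon_0/2<\upsilon<1/2$ in Lemma \ref{lem:uniform_test} are what keep the test rates, the sieve size, and the smoothness level $\rho=\lceil d/2\rceil$ consistent across the three ingredients.
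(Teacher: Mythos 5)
Your proposal is correct and follows essentially the same route as the paper: the paper's proof likewise reduces the theorem to Theorem A.1 of \cite{choudhuri2004bayesian}, citing Lemma \ref{lem:prior_pos} for prior positivity and combining Lemma \ref{lem:tail_prob}, Lemma \ref{lem:uniform_test}, and Condition \ref{cond:num_cov_order} (to convert $\exp(-b\,p_n^{1/d})$ into $\exp(-C_3 n)$) to establish the test/sieve conditions. The only difference is that you additionally unpack the numerator/denominator decomposition internal to that general theorem, which the paper simply invokes by reference.
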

Theorem \ref{thm:post_consistency} implies that the soft-thresholded Gaussian process prior can ensure that the posterior distribution of the spatially varying coefficient function concentrates in an arbitrarily small neighborhood of the true value, when both the number of subjects and number of spatial locations are sufficiently large. Given that the true  function of interest is piecewise smooth, sparse and continuous, the soft-threshold Gaussian process prior can further ensure that the posterior probability of the sign of the spatially varying coefficient function being correct converges to one as the sample size goes to infinity. The result is formally stated in the following theorem. 
\begin{theorem}\label{thm:sign_consistency}
	(Posterior Sign Consistency) Suppose the model assumptions, prior settings and  regularity conditions for Theorem \ref{thm:post_consistency} hold. 
	$$ \Pi\left[ \sgn\{\beta(\bfs)\} = \sgn\{\beta_0(\bfs)\}, \mbox{ for all } \bfs\in \cB\mid \bfD_n\right] \to 1, $$
	as $n\to\infty$ in $P_{\beta_0}^n$ probability.
\end{theorem}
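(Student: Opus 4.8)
The plan is to rephrase sign agreement through the latent representation and then reduce everything to the $L^1$ concentration already in hand. Every posterior draw has the form $\beta=g_{\lambda_0}(\tilde\beta)$ and the truth is $\beta_0=g_{\lambda_0}(\tilde\beta_0)$ with $\tilde\beta_0\in\mC^{\rho}(\cB)$ by Lemma~\ref{lemma:latent_process}, so the event $\{\sgn\beta(\bfs)=\sgn\beta_0(\bfs)\ \text{for all }\bfs\}$ is precisely the event that the three excursion sets $\{\tilde\beta>\lambda_0\}$, $\{\tilde\beta<-\lambda_0\}$ and $\{|\tilde\beta|\le\lambda_0\}$ coincide with $\cR_1$, $\cR_{-1}$ and $\cR_0$. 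I would therefore write the complementary mismatch event as a union of the three region-wise failures, so that by a union bound it suffices to show each failure carries posterior mass tending to $0$ in $P^n_{\beta_0}$ probability.

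First I would localize. By Lemma~\ref{lem:tail_prob} the posterior concentrates on the sieve $\Theta_n$, on which the draws are uniformly Lipschitz with constant at most $p_n^{1/(2d)}$ (the soft-thresholding is $1$-Lipschitz, so $\beta$ inherits the Lipschitz bound of $\tilde\beta$ and is globally, not just piecewise, controlled). I would then convert the $L^1$ concentration of Theorem~\ref{thm:post_consistency} into a supremum bound through the interpolation inequality $\|f\|_\infty\le C\,\|f\|_1^{1/(d+1)}\,\|f\|_{\mathrm{Lip}}^{d/(d+1)}$ on the bounded domain $\cB$. For this I need an actual $L^1$ rate $\epsilon_n\to0$ faster than $p_n^{-1/2}$, which I would extract from the exponential test bounds of Lemma~\ref{lem:uniform_test} together with the growth restriction on $p_n$ in Condition~\ref{cond:num_cov_order}; this yields $\|\beta-\beta_0\|_\infty\to0$ on a set of posterior probability tending to one.

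Next I would settle the interiors of the three regions. Fix $\delta>0$ and set $\cR_1^{\delta}=\{\bfs:\beta_0(\bfs)\ge\delta\}\subseteq\cR_1$ and $\cR_{-1}^{\delta}$ analogously; on these sets uniform closeness below $\delta$ forces $\beta>0$ and $\beta<0$, hence the correct sign. On the part of $\cR_0$ bounded away from $\partial\cR_1\cup\partial\cR_{-1}$, the construction of $\tilde\beta_0$ keeps $|\tilde\beta_0|$ strictly below $\lambda_0$, so uniform closeness forces $|\tilde\beta|\le\lambda_0$ and therefore $\beta=0$ there. Letting $\delta\downarrow0$ exhausts the interiors of all three regions, using the continuity property (\ref{cond:beta}.3) of $\beta_0$.

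The hard part is the transition shell $\{0<|\beta_0|<\delta\}$ adjacent to $\partial\cR_1\cup\partial\cR_{-1}$, where $\beta_0$ is nonzero but arbitrarily small and uniform closeness alone cannot pin down the sign (indeed a uniformly small downward perturbation of $\tilde\beta$ shrinks $\{\tilde\beta>\lambda_0\}$ and flips the sign on a thin positive-measure shell, which is invisible to the $L^1$ tests). Here I would exploit the soft-thresholding directly: since the boundaries are lower-dimensional with finite $(d-1)$-content under the smoothness in (\ref{cond:beta}.1), the shell has Lebesgue measure $O(\delta)$, and I must then rule out a persistent, positive-measure sign reversal hugging the boundary by playing the uniform modulus of continuity on $\Theta_n$ against the exact-zero behaviour of $g_{\lambda_0}$, finishing with a diagonal argument coupling $\delta\downarrow0$ with $n\to\infty$. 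Controlling this shell is the main obstacle; the remaining bookkeeping (the union bound over the three regions and the interior estimates above) is routine.
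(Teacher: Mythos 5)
You have identified the right reduction (handle $\cR_1$, $\cR_{-1}$, $\cR_0$ separately), but the core of your plan fails at two points. First, the upgrade from $L^1$ to sup-norm convergence via the interpolation inequality requires an actual posterior contraction \emph{rate} $\epsilon_n$ in $L^1$ with $\epsilon_n\cdot p_n^{1/(2(d+1))}\to 0$, i.e.\ $\epsilon_n=o(p_n^{-1/2})$, as you note yourself. But Theorem \ref{thm:post_consistency} is fixed-$\epsilon$ consistency only, and no rate is derivable from Lemma \ref{lem:uniform_test}, whose tests are built for a fixed separation $\epsilon$; moreover Condition \ref{cond:num_cov_order} forces $p_n\geq M_0 n^{d}$, so you would need $\epsilon_n=o(n^{-d/2})$, faster than any plausible nonparametric (or even parametric) rate for this model. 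This step cannot be repaired, only replaced. Second, even granting $\|\beta-\beta_0\|_\infty\to 0$, your treatment of $\cR_0$ is wrong: sign consistency there demands $\beta(\bfs)=0$ \emph{exactly}, and no norm-closeness of $\beta$ to $\beta_0$ yields exact zeros. Your claim that ``uniform closeness forces $|\tilde\beta|\leq\lambda_0$'' silently substitutes closeness of the latent field $\tilde\beta$ to $\tilde\beta_0$ for closeness of $\beta$ to $\beta_0$; the posterior consistency statement controls only $\beta=g_{\lambda_0}(\tilde\beta)$, and $|\beta|<\epsilon$ on $\cR_0$ is perfectly compatible with $|\tilde\beta|=\lambda_0+\epsilon/2$, i.e.\ with a nonzero $\beta$ of the wrong sign pattern. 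Finally, you concede that the transition shell near $\partial\cR_1\cup\partial\cR_{-1}$ remains uncontrolled, so the argument is incomplete even on its own terms.

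The paper's proof avoids all of this by never leaving the $L^1$ topology and working at the level of events rather than norms. Writing $\cU_\epsilon=\{\beta:\|\beta-\beta_0\|_1<\epsilon\}$ and $\cF_m(\cA)=\{\beta:\int_{\cA}|\beta-\beta_0|\,\md\bfs<1/m\}$, it observes that $\{\beta\equiv 0 \mbox{ on } \cR_0\}=\bigcap_{m\geq 1}\cF_m(\cR_0)$ (the prior, via soft-thresholding, charges exactly-zero functions) and that $\cF_m(\cR_0)\supseteq\cU_{1/m}$, so Theorem \ref{thm:post_consistency} applied at each fixed $m$ plus monotone continuity of the posterior measure gives the $\cR_0$ statement with no rate and no sup-norm control. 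On $\cR_1$ it uses openness of $\cR_1$ and continuity of the posterior draws (a property of the prior's support, condition (\ref{cond:beta}.3)) to show that the event $\{\beta(\bfs_0)>-1/m \mbox{ for all } \bfs_0\in\cR_1\}$ contains $\cU_{1/(2m)}$ — a pointwise lower bound deduced from a ball-averaged $L^1$ bound — and again passes to the limit in $m$; $\cR_{-1}$ is symmetric. In particular, the ``hard part'' you flag (the thin shell where $\beta_0$ is nonzero but small) dissolves in this scheme: the pointwise events are indexed over all of $\cR_1$ at once for each $m$, and the problematic shell is simply absorbed into the monotone limit rather than being estimated separately.
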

This theorem establishes the spatial variable selection consistency. It does not require the number of true imaging predictors is finite or less than the sample size.  This is different from most previous results, but it is reasonable in that the true spatially varying coefficient function is piecewise smooth and continuous and the soft-thresholded Gaussian process will borrow strength from neighboring locations to estimate the true imaging predictors. Please refer to the Appendix for the proofs of Theorems \ref{thm:post_consistency} and \ref{thm:sign_consistency}. 

\section{Posterior Computation}
\subsection{Model Representation and Prior Specifications}
We turn now to the practical applicability of our proposed method. We select a low-rank spatial model to ensure that computation remains possible for applications with large datasets.  We exploit the kernel convolution approximation of a spatial Gaussian process.  As discussed in \cite{higdon-1998}, any stationary Gaussian process $V(\bfs)$ can be written
$V(\bfs) = \int K(\bfs_j-\bft)w(\bft)d\bft$,
where $K$ is a kernel function and $w$ is a white-noise process with mean zero and variance $\sigma_w^2$.  This gives covariance function
$$   \mbox{cov}(\bfs,\bfs+\bfh) = \kappa(\bfh)=\sigma_a^2\int K(\bfs-\bft)K(\bfs+\bfh-\bft)d\bft, $$
which illustrates the connection between covariance $\kappa$ and kernel $K$.  This representation suggests the approximation for the latent Gaussian process
  $$ \tilde\beta(\bfs) = \sum_{l=1}^LK(\bfs-\bft_l)a_l,$$
where $\bft_1,...,\bft_L\in\mbR^d$ are a grid of spatial knots covering $\calB$, $K$ is a local kernel function, and $a_l\sim\mbox{N}(0,\sigma_a^2)$ is the coefficient associated with knot $l$.  We use tapered Gaussian kernels with bandwidth $\sigma_h$,
$$K(h) = \exp\left[-\frac{h^2}{2\sigma_h^2}\right]I(h<3\sigma_h),$$
so that $K(||\bfs-\bft_l||)=0$ for $\bs$ separated from $\bft_l$ by at least $3\sigma_h$.  Taking $L<p$ knots and selecting compact kernels both lead to computational savings, as discussed in Section \ref{s:MCMC}.

The compact kernels $K$ control the local spatial structure and the prior for the coefficients $\bfa = (a_1,\ldots,a_L)^{\rT}$ controls broad spatial structure.  Following the work by \cite{Nychka:2015} for geostatistical data, we assume that the knots $\bft_1,\ldots,\bft_L$ are arranged on an $m_1 \times \cdots \times m_d$ array, and use $l\sim k$ to denote that knots $\bft_l$ and $\bft_k$ are adjacent on this array.  We then use a conditionally autoregressive prior \citep{handbook:2010} for the kernel coefficients.  The conditional autoregressive prior is also defined locally, with full conditional distribution
\beq\label{CR}
   a_l|a_k, k\ne l \sim \mbox{N}\left(\frac{\vartheta}{n_l}\sum_{k\sim l}a_k, \frac{\sigma_a^2}{n_l}\right),
\eeq    
where $n_l$ is the number of knots adjacent to knot $l$, $\vartheta\in(0,1)$ determines the strength of spatial dependence, and $\sigma_a^2$ determines the variance. These full conditional distributions correspond to the joint distribution $\bfa \sim$ N$[0,\sigma_a^2(\bfM-\vartheta \bfA)^{-1}]$, where $\bfM$ is diagonal with diagonal elements $\{n_1,...,n_L\}$ and $\bfA$ is the adjacency matrix with $(k,l)$ element equal 1 if $k\sim l$ and zero otherwise (including zeros on the diagonal). 

Write $\wtbeta = \{\tilde\beta(\bfs_1),\ldots,\tilde\beta(\bfs_p)\}^{\rT}$. Denote by $\bfK$ the $p\times L$ kernel matrix with $(j,l)$ element $K(||\bfs_j-\bft_l||_{2})$, the prior for $\tilde\bfbeta$ is given by  $\wtbeta\sim \mN\{0,\sigma_a^2\bfK(\bfM-\vartheta\bfA)^{-1}\bfK^{\rT}\}$.  In this case, the $\tilde\beta(\bfs_j)$ do not have the equal variance, which may not generally be desirable.  Non-constant variance arises because the kernel knots $\bft_j$ may be unequally distributed, and because the conditional autoregressive model is non-stationary in that the variances of the $a_l$ are unequal.  

To stabilize the prior variance, define ${\tilde K}_{j,l} = K(||\bfs_j-\bft_l||_2)/w_{j}$ and ${\tilde \bfK}$ as the corresponding $p\times L$ matrix of standardized kernel coefficients, where $w_j$ are constants chosen so that the prior variance for each $\beta_j$ is equal.  We take $w_j$ to be the $j$-th diagonal element of $\mathrm{cov}(\tilde\beta^{\mathrm v})= \bfK(\bfM-\vartheta \bfA)^{-1}\bfK^T$, hence the kernel functions now depend on $\vartheta$.  By pulling the prior standard deviation  $\sigma_a$ out of the thresholding transformation, write $\bfbeta = \{\beta(\bfs_1),\ldots,\beta(\bfs_p)\}^{\rT}$, we have an equivalent model representation of model \eqref{eq:main_model} as
\beq\label{beta2}
Y_i \sim\mN[\bfW_i^{\rT}\bfalpha + p^{-1/2}_n\bfX_i^{\rT}\bfbeta,\sigma^2], \mbox{ with }  \beta(\bfs_j) = \sigma_a g_\lambda\{\tilde\beta(\bfs_j)\},
\eeq
where ${\wtbeta} \sim\mN\{0,{\tilde \bfK}(\bfM-\vartheta\bfA)^{-1}{\tilde \bfK^{\rT}}\}$. After standardization the prior variance of each $\tilde\beta(\bfs_j)$ is one, and therefore the prior probability that $\tilde \beta(\bfs_j)$ is nonzero is $2\Phi(-\lambda)$ for all $j$, where $\Phi(\bullet)$ denotes the cumulative distribution function of standard normal distribution. This endows each parameter with a distinct interpretation: $\sigma_a$ controls the scale of the non-zero coefficients; $\lambda$ controls the prior degree of sparsity; and $\vartheta$ controls spatial dependence. 

In practice, we normalize the response and covariates, and then select priors $\bfalpha\sim\mbox{N}(0,10^2 \mathrm{I}_q)$, $\sigma^2\sim\mbox{InvGamma}(0.1,0.1)$, $\sigma_a\sim \mbox{HalfNormal}(0,1)$, $\vartheta\sim\mbox{Beta}(10,1)$, and $\lambda\sim\mbox{Uniform}(\lambda_l,\lambda_u)$. Following \cite{banerjee:2004}, we use a beta prior for $\vartheta$ with mean near one because only values near one provide appreciable spatial dependence.  In many of the cases considered in the simulation studies, the sparsity parameter $\lambda$ cannot be fully identified.  To improve numerical stability, we suggest an informative data-driven prior.  We first fit the non-sparse model with $\lambda=0$ and record the proportion of the $\beta(\bfs_j)$ with posterior 95\% credible interval that exclude zero, denoted $u$.  The prior for $\lambda$ then restricts the prior proportion of non-zeros to be within 0.05 of $u$, i.e., $\lambda_l = -\Phi^{-1}[(u+0.05)/2]$ and $\lambda_u = -\Phi^{-1}[(u-0.05)/2]$.

\subsection{Markov chain Monte Carlo Algorithm}\label{s:MCMC}

We sample from the posterior distribution using Metropolis-Hastings within Gibbs sampling.  The parameters $\bfalpha$, $\sigma^2$, and $\sigma^2_a$ have conjugate full conditional distributions and are updated using Gibbs sampling.  The spatial dependence parameter $\vartheta$ is sampled using Metropolis-Hastings sampling using a beta candidate distribution with the current value as mean and standard deviation tuned to give acceptance around 0.4.  The threshold $\lambda$ is updated using Metropolis sampling with random-walk Gaussian candidate distribution with standard deviation tuned to have acceptance probability around 0.4.  The Metropolis update for $a_l$ uses the prior full conditional distribution in \eqref{CR} as the candidate distribution which gives high acceptance rate and thus good mixing without tuning.  

%

\section{Simulation study}\label{s:sim}

\subsection{Data generation}

In this section we conduct a simulation study to compare the proposed methods with other popular methods for scaler-on-image regression.  For each simulated observation, we generate a two-dimensional image $\bfX_i$ on the $m\times m$ grid $\{1,2,\ldots,m\}^2$ with $m=30$.  The covariates are generated following two covariance structures: exponential (``Exp'') and with shared structure (``SS'')  with the signal, $\bfbeta$.  The exponential covariates are Gaussian with mean $E(X_{ij})=0$ and cov$(X_{i,j},X_{i,l}) = \exp(-d_{j,l}/\vartheta_X)$, where $d_{j,l}$ is the distance between locations $j$ and $l$ and $\vartheta_X$ controls the range of spatial dependence.  The covariates generated with shared structure with $\bfbeta$ are $\bfX_i = {\tilde \bfX}_i/2 + e_i\bfbeta$, where ${\tilde \bfX}_i$ is Gaussian with exponential covariance with $\vartheta_X=3$ and $e_i \sim \mN(0,\upsilon^2)$; this is denoted as ``SS($\upsilon$)''. The response is then generated as $Y_i\sim$ N$(\bfX_i^{\rT}\bfbeta,\sigma^2)$.  Both $\bfX_i$ and $Y_i$ are independent for $i=1,\ldots,n$.  We consider two true $\bfbeta$ images (``Five peaks'' and ``Triangle'', plotted in Figure \ref{f:beta_sim}), sample sizes $n\in\{100,250\}$, spatial correlation $\vartheta_X\in\{3,6\}$, and error standard deviation $\sigma\in\{2,5\}$.  For all combinations of these parameters considered we generate $S=100$ datasets. 

\begin{figure}
	\caption{True $\bfbeta$ images used in the simulation study.}\label{f:beta_sim}
	\begin{center}
	\includegraphics[width=6in,height=2.7in]{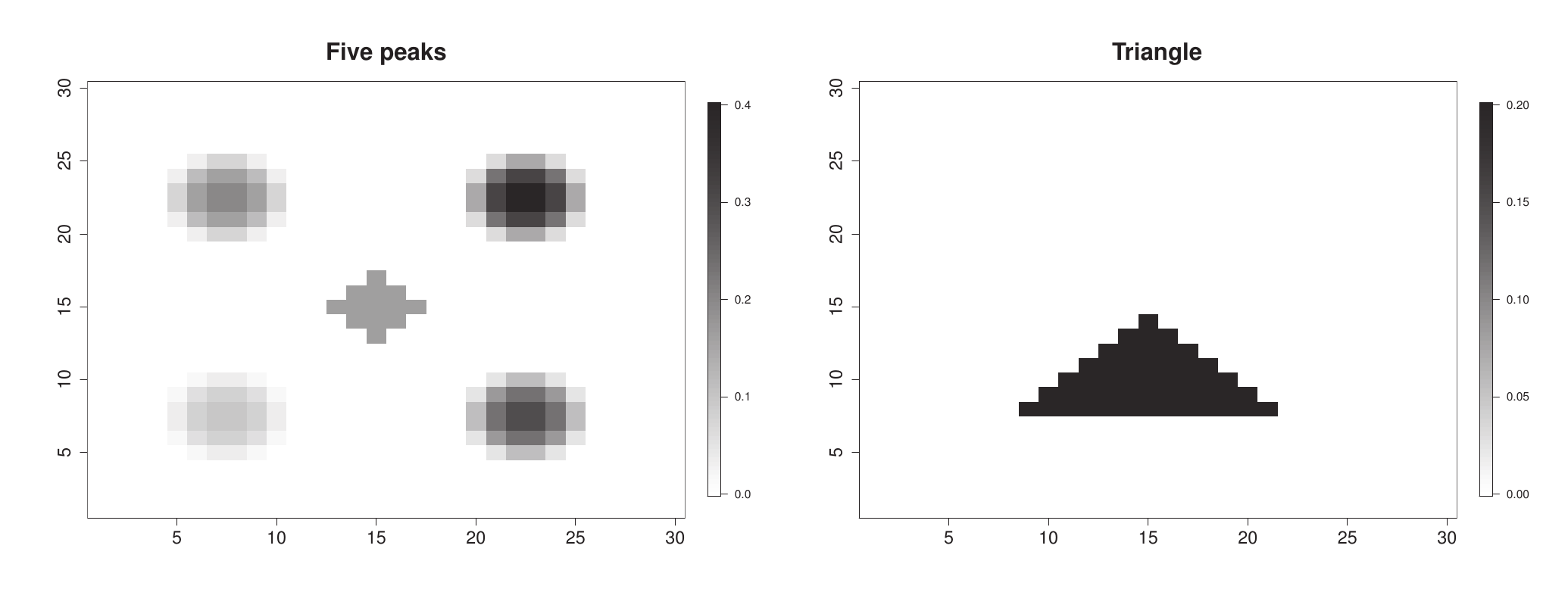}
	\end{center}
\end{figure}

\subsection{Methods}

We fit our model with a $m/2 \times m/2$ equally-spaced grid of knots covering $[1,m]\times [1,m]$ with bandwidth $\sigma_h$ set to the minimum distance between knots. We fit the model both with $\lambda>0$ and thus sparsity (``STGP'') and with $\lambda=0$ and thus no sparsity (``GP'').  For both models, we run the proposed Markov chain Monte Carlo algorithm 5,000 iterations with 1,000 burn-in, and compute the posterior mean of $\bfbeta$. For the sparse model, we compute the posterior probability of a nonzero $\beta(\bfs)$. 

 \def\hatbfbetaL{\hat{\beta}_{\mathrm{L}}^{\mathrm{v}}}
  \def\hatbfbetaFL{\hat{\beta}_{\mathrm{FL}}^{\mathrm{v}}}

We compare our method with the lasso \citep{tibshirani1996regression} and fused lasso \citep{Tibs:2005,Tibs:2011} penalized regression estimates
\beqn\label{fused}
  \hatbfbetaL &=& 
   \underset{\bfbeta}{\operatorname{argmin}}\left\{ (\bfY - \bfX\bfbeta)^{\rT}(\bfY-\bfX\bfbeta) + \tilde\lambda\sum_j|\beta(\bfs_j)|\right\}, \\
  \hatbfbetaFL &=& 
  \underset{\bfbeta}{\operatorname{argmin}}\left\{ (\bfY - \bfX\bfbeta)^{\rT}(\bfY-\bfX\bfbeta) +\tilde\lambda\sum_{j\sim k}|\beta(\bfs_j)-\beta(\bfs_k)| + \tilde\gamma\tilde\lambda\sum_j|\beta(\bfs_j)|\right\}.\nonumber
\eeqn 
The lasso estimate $\hatbfbetaL$ is computed using the {\tt lars} package \citep{lars} in {\tt R} \citep{R} and the tuning parameter $\tilde\lambda$ is selected using the Bayesian information criteria.  The fused lasso estimate $\hatbfbetaFL$ is computed using the {\tt genlasso} package \citep{genlasso} in {\tt R} and the tuning parameters $\tilde\gamma$ and $\tilde\lambda$ are selected using the Bayesian information criteria.  Due to computational considerations, we search only over $\tilde\gamma\in\{1/5,1,5\}$.

We also compare with a functional principle components analysis approach (``FPCA'').  We smooth each image using the technique of \cite{Xiao:2013} implemented in the {\tt fbps} function in {\tt R}'s {\tt refund} package \citep{refund}, compute the eigen decomposition of the sample covariance of the smoothed images, and then perform principal components regression using the lasso penalty tuned via the Bayesian information criteria.  We use the leading eigenvectors that explain 90\% of the variation in the sample images.  

Finally, we compare with the Bayesian spatial model of \cite{Goldsmith:2014} (``Ising'').  \cite{Goldsmith:2014} use the model $\beta(\bfs_j) = \tilde\alpha_j\theta_j$, where $\tilde\alpha_j\in\{0,1\}$ is the binary indicator that location $j$ is included in the model, and $\theta_j\in \mbR$ is the regression coefficient given that the location is included.  Both the $\tilde\alpha_j$ and $\theta_j$ have spatial priors; the continuous components $\theta_j$ follow a conditional autoregressive prior, and the binary components $\alpha_j$ follow an Ising (autologistic) prior \citep{handbook:2010} with full conditional distributions
 \beq\label{autologistic}
  \mathrm{logit}\{\Pi(\tilde\alpha_j=1|\tilde\alpha_l, l\ne j)\}= a + b\sum_{l\sim j}\tilde\alpha_l.
 \eeq
Estimating $a$ and $b$ is challenging because of the complexity of the Ising model \citep{Moller:2006}, therefore \cite{Goldsmith:2014} recommend selecting $a$ and $b$ using cross validation over $a\in (-4,0)$ and $b\in(0,2)$.  Due to computational limitations we select values in the middle of these intervals and set $a=-2$ and $b=1$.  Similar to our approach, 5,000 Markov chain Monte Carlo samples are simulated for the Ising model, and the first 1,000 are discarded as burn-in, and the posterior mean of $\beta(\bfs)$ and the posterior probability of a nonzero $\beta(\bfs)$ are computed. 
   	
\subsection{Results}

Table \ref{t:sim} gives the mean squared error for $\bfbeta$ estimation (averaged over location), type I error and  power for detect non-zero signals along with computing time.  The soft-thresholded Gaussian process (STGP) model gives the smallest mean squared error when the covariate has exponential correlation.  Compared to the Gaussian process (GP) model, adding thresholding reduces mean squared error by roughly 50\% in many cases.  As expected the functional principal component analysis (FPCA) methods gives the smallest mean squared error in final two scenarios where the covariates are generated to have a similar spatial pattern as the true signal.  Even in this case, the proposed method outperforms the other methods that do not exploit this shared structure.  For variable selection results, we only compare the proposed method with Fused lasso and the Ising model for a fair comparison, because Lasso does not incorporate spatial locations and other methods do not perform variable selection directly. The results show that Fused lasso has much larger Type I errors in all cases and the Ising model has a very small power to detect the signal in each case. It is clear that the proposed method is much better than Fused lasso and the Ising model for variable selection accuracy.  For the computing time, the proposed method is comparable to Fused lasso and faster than the Ising model.

\begin{table}
\small
\caption{Simulation study results.  Methods are compared in terms of mean squared error for $\bfbeta$ (``MSE for $\bfbeta$''), Type I error (\%) and Power (\%) for feature detection along with CPU time (minutes).  Data are generated for two true $\bfbeta_0$ (Fig. \ref{f:beta_sim}), covariance of the covariate $\bfX_i$ (exponential, ``Exp($\vartheta_X$)'', and shared structure, ``SS($\upsilon$)''), error standard deviation ($\sigma)$, and sample size ($n$). Results are reported as the mean over the $S$ simulated datasets.}\label{t:sim}

(a) MSE (multiplied by 1000) for $\beta$
	\begin{center}\begin{tabular}{cccc|cccccc}
  True $\bfbeta$ & cov ($\vartheta_X$) & $\sigma$ & $n$ & 
			Lasso & Fused lasso & FPCA & Ising & GP & STGP \\\hline
	Five peaks & Exp(3) & 5 & 100 &
	 31$\cdot$90   &    18$\cdot$48 &    3$\cdot$67 &    4$\cdot$44 &    2$\cdot$63 &     1$\cdot$65  \\		
	 & Exp(6) & 5 & 100 &
	 54$\cdot$99  &    2$\cdot$66  &    3$\cdot$33 &    4$\cdot$14 &    2$\cdot$07  &     1$\cdot$93  \\		
	 & Exp(3) & 2 & 100 &
	10$\cdot$20   &    4$\cdot$42  &    2$\cdot$51 &    2$\cdot$71 &    1$\cdot$50  &     0$\cdot$70   \\ 		
	 & Exp(3) & 5 & 250 &
	66$\cdot$85  &    1$\cdot$54  &    3$\cdot$01 &    5$\cdot$09 &    1$\cdot$71 &     0$\cdot$91  \\  	
	Triangle & Exp(3) & 5 & 100 &
	28$\cdot$31  &    18$\cdot$08 &    1$\cdot$83 &    2$\cdot$75 &    1$\cdot$80  &     0$\cdot$82  \\ 	
	 & Exp(6) & 5 & 100 &
	51$\cdot$90   &    4$\cdot$32  &    1$\cdot$63 &    2$\cdot$64 &    1$\cdot$76 &     0$\cdot$88  \\ 
	 & Exp(3) & 2 & 100 &
	7$\cdot$10    &    3$\cdot$74  &    1$\cdot$26 &    1$\cdot$35 &    1$\cdot$01 &     0$\cdot$55  \\ 		
	 & Exp(3) & 5 & 250 &
	65$\cdot$12  &    0$\cdot$69  &    1$\cdot$50  &    3$\cdot$33 &    1$\cdot$19 &     0$\cdot$68  \\ 	
	Triangle			& SS(2) & 5 & 100 &
    105$\cdot$80  &    70$\cdot$65 &    0$\cdot$98 &    2$\cdot$77 &    3$\cdot$28 &     1$\cdot$40   \\ 	
				& SS(4) & 5 & 100 &
   106$\cdot$62 &    71$\cdot$23 &    0$\cdot$34 &    3$\cdot$18 &    3$\cdot$39 &     1$\cdot$81  \\  
    	\end{tabular}\end{center}

(b) Type I error (\%)
\begin{center}\begin{tabular}{lll|ccccc}
cov &  True $\bfbeta$  & $n$ & Fused lasso & Ising & STGP \\\hline
Exp  &  Five peaks  &  100  &   18$\cdot$73 &   0$\cdot$09 &      3$\cdot$61 & \\
    	&              &  250  &   25$\cdot$88 &   0$\cdot$17 &      5$\cdot$62 & \\
      &  Triangle    &  100  &   19$\cdot$63 &   0$\cdot$06 &     3$\cdot$09 & \\
      &              &  250  &   11$\cdot$88 &   0$\cdot$14 &      4$\cdot$45 & \\
 SS   &  Five peaks  &  100  &   19$\cdot$58 &   0$\cdot$00 &    0$\cdot$39 & \\
      &              &  250  &   15$\cdot$57 &   0$\cdot$03 &     0$\cdot$71 & \\
      &  Triangle    &  100  &   20$\cdot$18 &   0$\cdot$00 &     1$\cdot$36 & \\
      &              &  250  &    1$\cdot$38 &   0$\cdot$03 &     2$\cdot$14 & 		
\end{tabular}\end{center}
		
(c) Power (\%)
\begin{center}\begin{tabular}{lll|ccccc}
True $\bfbeta$ & cov  & $n$ & Fused lasso  & Ising &STGP \\\hline
Exp  &  Five peaks  &  100  &   35$\cdot$21 &   4$\cdot$41 &    44$\cdot$78 & \\
     &              &  250  &   76$\cdot$45 &   9$\cdot$76 &   71$\cdot$77 & \\
     &  Triangle    &  100  &   49$\cdot$84 &   7$\cdot$71 &   89$\cdot$22 & \\
     &              &  250  &   93$\cdot$90 &  15$\cdot$84 &    96$\cdot$63 & \\
SS   &  Five peaks  &  100  &   29$\cdot$23 &   5$\cdot$59 &    30$\cdot$76 & \\
     &              &  250  &   49$\cdot$01 &   7$\cdot$52 &    48$\cdot$74 & \\
     &  Triangle    &  100  &   37$\cdot$86 &   7$\cdot$02 &    75$\cdot$53 & \\
     &              &  250  &   84$\cdot$27 &  12$\cdot$57 &    87$\cdot$14 & 
\end{tabular}\end{center}
		
(d) Computing time (minutes)
\begin{center}\begin{tabular}{llll|cccccc}
		True $\bbeta$ & cov ($\vartheta_X$) & $\sigma$ & $n$ & 
		Lasso & Fused lasso & FPCA & Ising & GP & STGP \\\hline
		Five peaks & 3 & 5 & 100 &
		0$\cdot$02 &    16$\cdot$77 &    5$\cdot$40  &    27$\cdot$61 &    4$\cdot$81 &     17$\cdot$69 \\ 
		\end{tabular}\end{center}

\end{table}

%
%
%
%
%
%

\section{Analysis of EEG data}\label{s:eeg}

Our motivating application is the study of the relationship between the electrical brain activity as measured through multi channel electroencephalographic (EEG) signals and genetic predisposition to alcoholism. EEG is a medical imaging technique that records the electrical activity in the brain by measuring the current flows produced when the neurons are activated. The study comprises a total of 122 subjects - 77 alcoholic subjects and 45 non-alcoholic controls. For each subject 64 electrodes were placed on their scalp and EEG was recorded from each electrode at a frequency of 256Hz. The electrode positions were located at standard sites (standard electrode position nomenclature; American Electroencephalographic Association (1991)). 
The subjects were presented with 120 trials under several settings involving one stimulus or two stimuli. We consider the multichannel average EEG across the 120 trials corresponding to a single stimulus. The dataset is publicly available at the University of California at Irvine Knowledge Discovery of Datasets \underline{ https://kdd.ics.uci.edu/databases/eeg/eeg.data.html}. 

These data have been previously analyzed by \cite{li2010dimension,hung2013matrix} and \cite{zhou2014regularized}; however all the existing literature ignored the spatial location of the electrodes on the scalp and used instead their ID number, ranging from $1$ to $64$ which is assigned arbitrarily relative to the electrodes' position on the scalp. Our goal of the analysis is to detect the regions of brain which are most predictive of the alcoholism status; thus accounting for the actual position of the electrodes is a key component in our approach. In the absence of more sophisticated means to determine the electrodes' position on the scalp, we consider a lattice design and assign a two-dimensional location to each electrode that matches closely the electrode's standard position. Using the labels of the electrodes, we were able to identify only $60$ of them. As a result our analysis will be based on the multichannel EEG from these $60$ electrodes.

In accordance with the notation employed earlier, let $Y_i$ be the alcoholism status indicator with $Y_i=1$ if the $i$th subject is alcoholic and $0$ otherwise. Furthermore, let $X_i =\{X_i(\bfs_j; t): \bs_j \in \mbR^2, j=1, \ldots, 60, t=1, \ldots 256\}$ be the EEG image data for the $i$th subject which is indexed by a two-dimensional index accounting for the spatial location (on the matching lattice design), $\bs_j$, and one-dimensional index for time, $t$. 

We use a probit model to relate the alcoholism status and the multichannel EEG image: $Y_i \mid X_i, \beta \sim  \mathrm{Bernoulli}(p_i)$ and $ \Phi^{-1}(p_i) = \sum_{j=1}^{60}\sum_{ k=1}^{256} X_i(\bfs_j, t_k) \beta(\bfs_j, t_k)$. The spatially-temporally varying coefficient function $\beta$ quantifies the effect of the image on the response over time and is modeled using the soft-thresholded Gaussian process on spatial and temporal domain.   We select a $5\times 5$ square grid of spatial knots and 64 temporal knots, for a total of 1,600 three-dimensional knots.  We initially fit a conditional autoregressive model with a different dependence parameter $\vartheta$ for spatial and temporal neighbors \citep{reich2007}, but found that the convergence was slow and that the estimates of both the spatial and temporal dependence were similar. Thus, we elected to use the same dependence parameter for all neighbors.  Also, we consider an informative prior for the threshold $\lambda \sim \mathrm{Uniform}(1.43, 1.96)$; intuitively this choice corresponds to an {\em a priori} inclusion probability between 5\%-15\%.   

We evaluate the prediction performance of the proposed model. Figure \ref{f:auc} shows the receiving operating characteristic curve (ROC) using five-fold cross validation. The results are compared with the ones corresponding to the lasso, the functional principal component analysis and the Gaussian process approach (the soft-thresholding Gaussian process approach with thresholding parameter $\lambda=0$). To facilitate computation for these methods, we thin the time points by two, leaving 128 time points.  While no model is uniformly superior, the area under the curve (AUC) corresponding to our approach is optimal among the alternatives we considered.  

\begin{figure}
	\caption{ROC curves for the five-fold cross validation of the EEG data. AUC refers to area under the curve}\label{f:auc}
	\begin{center}
		\includegraphics[height=4in,width=4in]{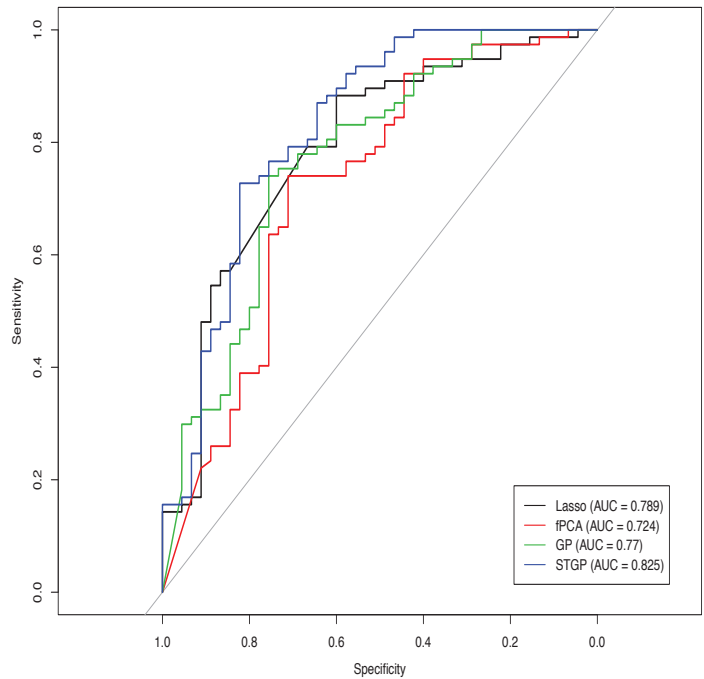}
	\end{center}
\end{figure}

The differences between the models are further examined in the estimated $\beta$ functions plotted in Fig. \ref{f:4fits} (for now we ignore the spatial location of the electrodes and plot them using their ID number). The lasso solution is non-zero for a single spatiotemporal location, while the functional principal component analysis and Gaussian process methods lead to non-sparse and thus uninterpretable $\beta$ estimates.   In contrast, the soft-thresholded Gaussian process based estimate is near zero for the  vast majority of locations, and isolates a subset of the electrodes near time point 86 as the most powerful predictors of alcoholism.

\begin{figure}
	\caption{Estimated $\bbeta$ for the EEG data.  The GP and STGP estimates are posterior means.}\label{f:4fits}
	\begin{center}
	
	\includegraphics[height=6in,width=6in]{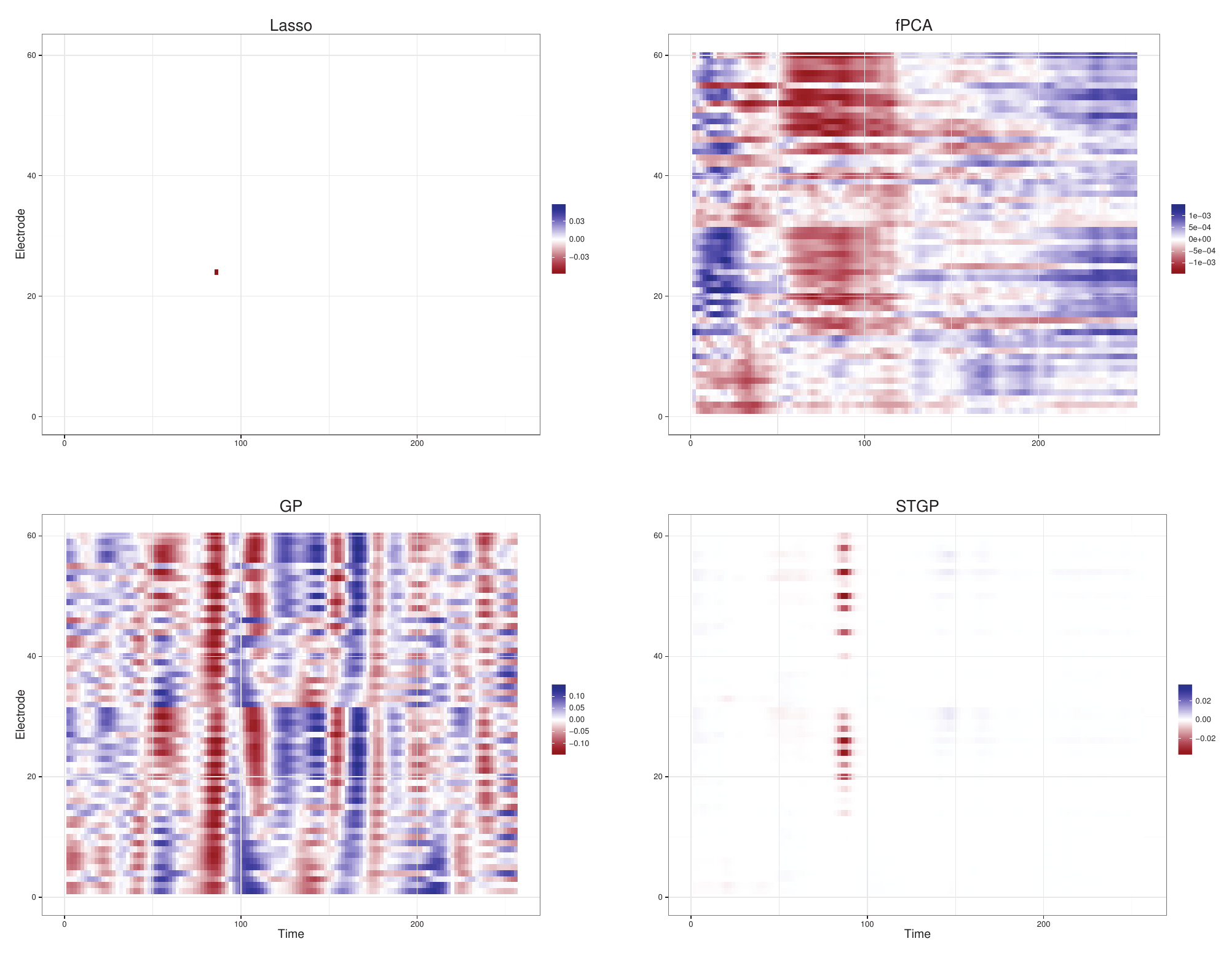}

	\end{center}
\end{figure}

Our analysis indicates that EEG measurements at time $t=86$, which roughly corresponds to $0.336$ fraction of second, are predictive of the alcoholism status. This observation is further confirmed by the plot of the posterior probability of non-zero $\beta(\bfs_j, t)$'s in Fig. \ref{f:finalfit}a. This implies a delayed reaction to the stimulus; though such finding has to be confirmed with the investigators. To gain more insight into these findings, Fig. \ref{f:finalfit}b-\ref{f:finalfit}d focus on a particular time and display the posterior mean and posterior probability of nonzero across the electrodes locations. They indicate that the right occipital/lateral part is the most predictive of the alcoholism status.

\begin{figure}
	\caption{Summary of STGP analysis of the EEG data.  Panel (a) plots the posterior probability of a nonzero $\beta(\bfs,t)$; each electrode is a line plotted over time $t$.  The remaining panels map either the posterior probability of a nonzero $\beta(\bfs,t)$ or the posterior mean of $\beta(\bfs,t)$ at individual time points.}\label{f:finalfit}
	\begin{center}

\includegraphics[height=6in,width=6in]{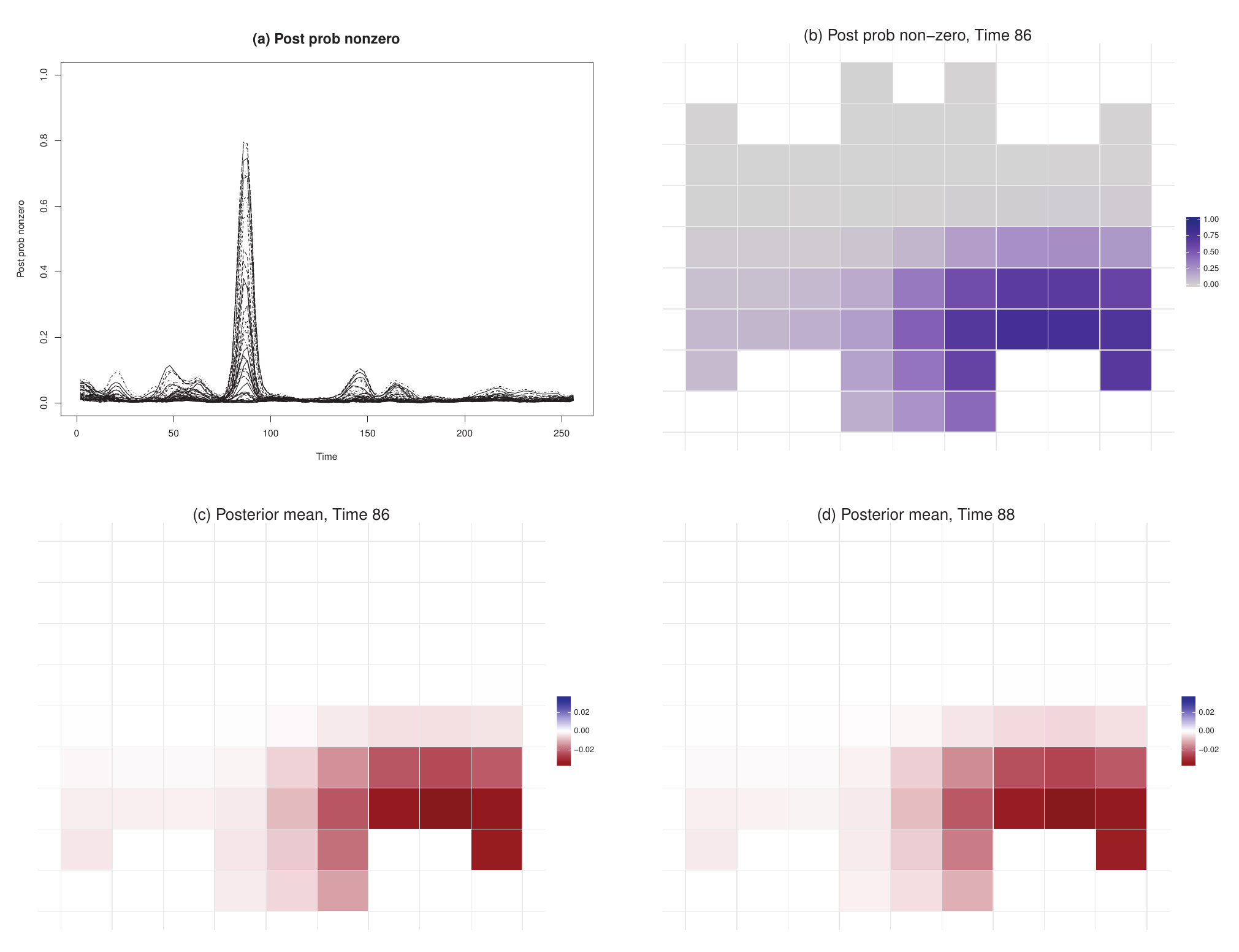}

     \end{center}
\end{figure}

\section{Discussions}\label{s:dis}
In this work, we proposed a new class of Bayesian nonparametric prior model: the soft-thresholded Gaussian process, for variable selection in the scalar-on-image regression. It is completely different from the hard thresholded Gaussian process developed by~\cite{shi2015thresholded} for the image-on-scalar regression. The soft-thresholded Gaussian process has two desired properties of Bayesian nonparametric models: 1) the prior support is large for a collection of piecewise smooth, sparse and continuous functions and 2) the posterior computation is feasible. We establish the posterior consistency for parameter estimation and variable selection for the normal response model. To the best of our knowledge, we are the first to obtain posterior consistency and the spatial variable selection consistency for scalar-on-image regression model.  The regularity conditions for the theoretical development are not strong, the results hold even when the number of true predictors is greater than sample size. Also, we develop efficient posterior computation algorithms using the low rank approximation through the conditional autoregressive model. The posterior approximation computation algorithm is general and can be used for other models besides the normal response model. Our simulation studies and the analysis of EEG data both indicate that the proposed approach performs better than all existing methods in terms of model parameter estimations, predictions and scientific findings. 

The proposed method generates a few feature directions that we consider to pursue. First, we plan to develop a more efficient posterior computation algorithm for analysis of voxel-level functional magnetic resonance imaging (fMRI) data, which typically contains 180,000 voxels for each subject. Any fast and scalable Gaussian processes approximation approach can be potentially applied to the soft-thresholded Gaussian process. For example, the recent ideas of nearest-neighbor Gaussian process approach by~\cite{datta2016hierarchical} can be applied to our model. Also, it is of great interest to perform joint analysis of dataset involving  multiple imaging modalities, such as fMRI, diffusion tensor imaging (DTI) and structural MRI. It is very changeling to model the dependence between the multiple imaging modality over space and to select the interactions between multiple modality imaging predictors in the scalar-on-image regression. The extension of the soft-thresholded Gaussian process can provide a potential solution to this problem. The basic idea is to introduce hierarchical latent Gaussian processes and different types of thresholding parameters for different modalities, leading to an hierarchical soft-thresholded Gaussian process as the prior model for the effects of interactions. 

\section*{Acknowledgement}
This research was supported partially by the National Institutes of Health grant R01 MH105561 (Kang), R01 NS085211 (Staicu) and R01 DE024984 (Reich), and the National Science Foundation grant DMS 0454942 (Staicu) and DMS 1513579 (Reich).

\section*{Supplementary material}
\label{SM}

Supplementary material contains the proofs of Lemmas 1, 3--5 and A1 -- A6. 

\appendix

\section*{Appendix 1}
\subsection*{Covering Number for Sieves}
\begin{lemma}\label{lem:covering_number}
 The $\epsilon$-covering number $N(\epsilon, \Theta_{n}, \|\cdot\|_{\infty})$ of $\Theta_{n}$ in the supremum norm satisfies
$$\log N(\epsilon, \Theta_{n}, \|\cdot\|_{\infty})\leq Cp_n^{1/(2\rho)} \epsilon^{-d/\rho}.$$
 \end{lemma}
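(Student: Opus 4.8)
The plan is to reduce the estimate to the classical metric entropy of a Hölder (smoothness-$\rho$) ball by passing to the latent process, exploiting the two structural lemmas already in hand. The crucial observation is that although a generic $\beta\in\Theta_n$ is only \emph{piecewise} smooth --- its gradient jumps across $\partial\cR_1\cup\partial\cR_{-1}$ --- its latent representative $\tilde\beta$ supplied by Lemma~\ref{lemma:latent_process} is globally in $\mC^\rho(\cB)$. Covering the latent functions therefore avoids any entropy contribution from the free boundaries of the sign regions, which (for $d\ge 3$, where $d-1>d/\rho$) would otherwise inflate the rate beyond $\epsilon^{-d/\rho}$. This is the reason to work through the latent process rather than with $\beta$ directly.

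Concretely, first I would establish a uniform, norm-controlled version of Lemma~\ref{lemma:latent_process}: there is a fixed threshold $\lambda_0$ and a constant $c$ such that every $\beta\in\Theta_n$ may be written as $\beta=g_{\lambda_0}(\tilde\beta)$ for some $\tilde\beta$ in the ball $\tilde\Theta_n=\{\tilde\beta\in\mC^\rho(\cB):\ \|\tilde\beta\|_\infty\le c\,p_n^{1/(2d)},\ \sup_{\bfs\in\cB}|D^{\bftau}\tilde\beta(\bfs)|\le c\,p_n^{1/(2d)},\ 1\le\|\bftau\|_1\le\rho\}$. On the support one has $\tilde\beta=\beta+\lambda_0\,\sgn(\beta)$, so the sieve constraints on $\|\beta\|_\infty$ and on $D^{\bftau}\beta$ over $\cR_1\cup\cR_{-1}$ are inherited by $\tilde\beta$, while over $\cR_0$ the extension is smooth with $|\tilde\beta|\le\lambda_0$. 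Then, since $g_{\lambda_0}$ is $1$-Lipschitz by Lemma~\ref{lemma:Lipschitz}, any $\epsilon$-net of $\tilde\Theta_n$ in $\|\cdot\|_\infty$ maps through $g_{\lambda_0}$ to an $\epsilon$-net of $\Theta_n$, giving $N(\epsilon,\Theta_n,\|\cdot\|_\infty)\le N(\epsilon,\tilde\Theta_n,\|\cdot\|_\infty)$.

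It then remains to bound the entropy of the Hölder ball $\tilde\Theta_n$. Here I would invoke the classical Kolmogorov--Tikhomirov estimate: for the unit ball of $\mC^\rho(\cB)$ on a bounded $\cB\subset\mbR^d$ with nonempty interior, $\log N(\delta,\,\cdot\,,\|\cdot\|_\infty)\le K\,\delta^{-d/\rho}$, where the Taylor-remainder control of Definition~\ref{def:diff_function} supplies exactly the smoothness required. A ball of radius $R=c\,p_n^{1/(2d)}$ is handled by the scaling $\tilde\beta\mapsto \tilde\beta/R$, which replaces $\delta$ by $\epsilon/R$; hence $\log N(\epsilon,\tilde\Theta_n,\|\cdot\|_\infty)\le K(R/\epsilon)^{d/\rho}=K\,R^{d/\rho}\epsilon^{-d/\rho}$. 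Substituting $R^{d/\rho}=(c\,p_n^{1/(2d)})^{d/\rho}\asymp p_n^{1/(2\rho)}$ delivers $\log N(\epsilon,\Theta_n,\|\cdot\|_\infty)\le C\,p_n^{1/(2\rho)}\epsilon^{-d/\rho}$, as claimed.

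The main obstacle is the first step: making Lemma~\ref{lemma:latent_process} \emph{uniform} over the entire sieve with a single $\lambda_0$ while keeping the $\mC^\rho$ norm of $\tilde\beta$ bounded by a fixed multiple of $p_n^{1/(2d)}$. The delicate part is a Whitney-type extension of $\tilde\beta$ across $\cR_0$ that simultaneously matches all derivatives up to order $\rho$ along $\partial\cR_1\cup\partial\cR_{-1}$, keeps $|\tilde\beta|\le\lambda_0$ on $\cR_0$ (so that $g_{\lambda_0}(\tilde\beta)=0$ there and no spurious sign regions are created), and does not inflate the derivative bounds --- which is feasible because condition~(\ref{cond:beta}.2) guarantees that $\cR_0$ has nonempty interior. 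Once this reduction is in place, the remainder is the routine scaling of a standard entropy bound.
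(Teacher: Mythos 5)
Your proposal is correct and follows essentially the same route the paper's machinery is built for: the construction in Lemma \ref{lemma:latent_process} (which works for an arbitrary fixed $\lambda_0$ and, since the extension is an average of Taylor polynomials of $\alpha$ against a fixed unit-mass weight, yields $\|\tilde\beta\|_\infty$ and derivative bounds up to order $\rho$ of order $p_n^{1/(2d)}$ uniformly over the sign-region geometry) reduces $\Theta_n$ to a $\mC^{\rho}(\cB)$ ball of radius $O(p_n^{1/(2d)})$, the $1$-Lipschitz property of $g_{\lambda_0}$ from Lemma \ref{lemma:Lipschitz} transfers any net of the latent ball to a net of $\Theta_n$, and the classical Kolmogorov--Tikhomirov entropy bound for smoothness-$\rho$ balls, in the form used by \cite{ghosal2006posterior}, gives $\log N(\epsilon)\lesssim (p_n^{1/(2d)}/\epsilon)^{d/\rho}= p_n^{1/(2\rho)}\epsilon^{-d/\rho}$. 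The uniform, norm-controlled latent representation that you flag as the main obstacle is indeed the only delicate step, and it is resolved exactly as you sketch.
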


\subsection*{Test Constructions}
\def\cV{\mathcal{V}}
\def\cU{\mathcal{U}}
\def\cW{\mathcal{W}}
\begin{lemma}\label{lem:gamma_low_bound}
 Suppose Condition \ref{cond:loc} holds for all $\bfs_j$ for $j = 1,\ldots, p_n$ and $K$ be the constant in Condition \ref{cond:loc}. Let $\upsilon>0$ be a constant. For each integer $n$, let $\Lambda_n$ be a collection of continuous functions, where each function $\gamma(\bfs)$ is differentiable on a set $\cD$ that is dense in $\cB$ and $ \sup_{\bfs \in \cD} |D^{\bftau}\gamma| \leq p_n^{\|\bftau\|_1/2d}+\upsilon$, for $\|\bftau\|_1 \geq 0$. For each function $\gamma\in \Lambda_n$ and $\epsilon > 0$, define $\cV_{\epsilon, \gamma} = \{\bfs: |\gamma(\bfs)| > \epsilon \}$. For all $n > N$ and $\gamma \in \Lambda_n$, 
$$\sum_{j=1}^{p_n} |\gamma(\bfs_j)| \geq \frac{\lambda(\cV_{\epsilon,\gamma})K \epsilon p_n}{2}. $$
\end{lemma}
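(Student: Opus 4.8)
The plan is to exploit the fact that the bound on the first derivatives of $\gamma$ forces $\gamma$ to oscillate very little across each subregion $\cB_j$, so that whenever a subregion meets the super-level set $\cV_{\epsilon,\gamma}$ its representative value $|\gamma(\bfs_j)|$ is still of order $\epsilon$; then I would count such subregions through the Lebesgue-measure budget provided by Condition~\ref{cond:loc}.

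First I would establish a local Lipschitz estimate. Taking multi-indices $\bftau$ with $\|\bftau\|_1=1$ in the hypothesis gives $\sup_{\bfs\in\cD}|D^{\bftau}\gamma(\bfs)| \le p_n^{1/(2d)}+\upsilon$ for each coordinate direction. Since $\cD$ is dense in $\cB$ and $\gamma$ is continuous on $\cB$, a mean-value / fundamental-theorem-of-calculus argument along coordinate-parallel segments yields, for any two points $\bfs,\bfs'$ lying in a common subregion, $|\gamma(\bfs)-\gamma(\bfs')| \le (p_n^{1/(2d)}+\upsilon)\,\|\bfs-\bfs'\|_1$. By Conditions~(\ref{cond:loc}.2)--(\ref{cond:loc}.3), the sup-norm diameter of $\cB_j$ is at most $\zeta(\cB_j)^{1/d} < (Kp_n)^{-1/d}$, so $\|\bfs-\bfs'\|_1 \le d\,(Kp_n)^{-1/d}$ whenever $\bfs,\bfs'\in\cB_j$. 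Combining these, the oscillation of $\gamma$ over any single subregion is bounded by $(p_n^{1/(2d)}+\upsilon)\,d\,(Kp_n)^{-1/d}$, whose leading term is of order $p_n^{-1/(2d)}$ and hence tends to zero. I would then choose $N$, uniformly over $\gamma\in\Lambda_n$, so that this oscillation is strictly below $\epsilon/2$ for every $n>N$.

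Next I would transfer the super-level information to the grid points. Fix $n>N$ and $\gamma\in\Lambda_n$, and set $\mathcal J = \{j : \cB_j\cap\cV_{\epsilon,\gamma}\neq\emptyset\}$. If $j\in\mathcal J$, choose $\bfs\in\cB_j$ with $|\gamma(\bfs)|>\epsilon$; the oscillation bound gives $|\gamma(\bfs_j)| \ge |\gamma(\bfs)| - |\gamma(\bfs)-\gamma(\bfs_j)| > \epsilon - \epsilon/2 = \epsilon/2$. Because the $\cB_j$ partition $\cB$ and cover $\cV_{\epsilon,\gamma}$, and each satisfies $\mV(\cB_j)\le\zeta(\cB_j)<(Kp_n)^{-1}$, the measure budget yields $\lambda(\cV_{\epsilon,\gamma}) \le \sum_{j\in\mathcal J}\mV(\cB_j) < |\mathcal J|/(Kp_n)$, hence $|\mathcal J| > Kp_n\,\lambda(\cV_{\epsilon,\gamma})$. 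Therefore $\sum_{j=1}^{p_n}|\gamma(\bfs_j)| \ge \sum_{j\in\mathcal J}|\gamma(\bfs_j)| > (\epsilon/2)\,|\mathcal J| > \tfrac12 K\epsilon p_n\,\lambda(\cV_{\epsilon,\gamma})$, which is the claimed inequality.

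The main obstacle I anticipate is the first step: promoting the pointwise derivative bound on the dense set $\cD$ to a genuine Lipschitz estimate valid between arbitrary points of a subregion. This requires care because $\gamma$ is only assumed differentiable on $\cD$, so the connecting segments must be handled so that the derivative bound applies along them, invoking density of $\cD$, continuity of $\gamma$, and, if $\cB$ is not convex, a short polygonal path inside $\cB$ of comparable $L^1$ length. Once that estimate is secured, the vanishing of the oscillation and the counting argument are routine.
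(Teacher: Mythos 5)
Your counting half of the argument is correct, and the constants come out exactly as required: once the oscillation of $\gamma$ over each cell $\cB_j$ is below $\epsilon/2$, every cell that meets $\mathcal{V}_{\epsilon,\gamma}$ contributes at least $\epsilon/2$ at its representative $\bfs_j$, and parts (\ref{cond:loc}.2) and (\ref{cond:loc}.3) of Condition \ref{cond:loc} convert the measure of $\mathcal{V}_{\epsilon,\gamma}$ into a count of more than $Kp_n\lambda(\mathcal{V}_{\epsilon,\gamma})$ such cells, which yields precisely the claimed bound. (The paper relegates its own proof of this lemma to the online supplement, so the assessment here is of your argument on its own terms; this counting step is surely the intended route.)

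The genuine gap is the step you flagged yourself: promoting ``bounded first-order partials on a dense set $\cD$'' plus continuity to a Lipschitz bound between arbitrary points of a cell. Under the lemma's literal hypotheses this is not merely delicate --- it is false, and no polygonal-path or density argument can repair it. The Cantor function is continuous, has derivative $0$ on the open dense complement of the Cantor set, and is not constant; moreover, by inserting an up-and-down Cantor staircase between consecutive grid points one can build, for each $n$, a continuous $\gamma$ satisfying every stated hypothesis (all partials vanish on a dense open set, $|\gamma|\leq 1$) with $\gamma(\bfs_j)=0$ for all $j$ while $|\gamma|>\epsilon$ on a set of positive measure, so the conclusion itself fails under the literal hypotheses. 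What saves the lemma is the structure of the functions it is applied to, namely $\gamma=\beta-\beta_0$ with $\beta\in\Theta_n$ and $\beta_0\in\Theta$: such a function vanishes identically on its region $\cR_0$, which contains $\partial\cR_1\cup\partial\cR_{-1}$, and is smooth with partials bounded by $p_n^{1/(2d)}$ on the open set $\cR_1\cup\cR_{-1}$. For two points $\bfs,\bfs'$ of a convex cell, parametrize the segment by $\ell(t)$, let $U$ be the relatively open set of times with $\ell(t)\in\cR_1\cup\cR_{-1}$, and set $h(t)=\beta(\ell(t))$; then $h=0$ off $U$, $|h'|\leq p_n^{1/(2d)}\|\bfs-\bfs'\|_1$ on $U$, and $h$ vanishes at every interior endpoint of every component of $U$ (those points lie in $\cR_0$), so a component-by-component application of the fundamental theorem of calculus gives $|\beta(\bfs)-\beta(\bfs')|\leq p_n^{1/(2d)}\|\bfs-\bfs'\|_1$. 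Applying this to $\beta$ and $\beta_0$ separately produces the $O(p_n^{-1/(2d)})$ oscillation bound your second step needs, after which your proof goes through unchanged. The fix, then, is to prove the Lipschitz estimate for functions with this piecewise structure rather than for arbitrary densely differentiable functions.
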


\begin{lemma}\label{lem:beta_low_bound}
Suppose Conditions \ref{cond:num_cov_order} and \ref{cond:beta} hold. For each $\epsilon>0$, there exists an integer $N$ and $r>0$ such that, for all $n>N$ and for all $\beta \in \Theta_n$ such that $\|\beta -\beta_0\|_1 > \epsilon$, then 
$$\sum_{j=1}^{p_n} |\beta(\bfs_j) - \beta_0(\bfs_j)| > r p_n.$$
\end{lemma}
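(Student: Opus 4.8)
The plan is to treat $\sum_{j=1}^{p_n}|\gamma(\bfs_j)|$, where $\gamma := \beta - \beta_0$, as a quadrature approximation to the integral $\|\gamma\|_1 = \int_{\cB}|\gamma(\bfs)|\,d\bfs$ over the partition $\{\cB_j\}_{j=1}^{p_n}$ of $\cB$ supplied by Condition \ref{cond:loc} (a standing assumption on the spatial locations), and to control the quadrature error by the gradient bound built into the sieve $\Theta_n$. The target is to show that, for all large $n$,
$$\|\gamma\|_1 \;\le\; \frac{1}{Kp_n}\sum_{j=1}^{p_n}|\gamma(\bfs_j)| \;+\; o(1).$$
Since $\|\gamma\|_1 = \|\beta-\beta_0\|_1 > \epsilon$ by hypothesis, absorbing the vanishing term into $\epsilon/2$ would then give $\sum_{j}|\gamma(\bfs_j)| > (K\epsilon/2)\,p_n$, so the lemma follows with $r = K\epsilon/2$.

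First I would establish that $\gamma$ is Lipschitz in the $\|\cdot\|_1$ metric with constant $L = p_n^{1/(2d)} + M$, where $M$ bounds the first partial derivatives of $\beta_0$ (finite because $\beta_0 \in \Theta$ by Condition \ref{cond:beta}). Both $\beta \in \Theta_n$ and $\beta_0$ are globally continuous on $\cB$ and have first partials bounded by $p_n^{1/(2d)}$ and $M$, respectively, on their non-zero regions and equal to zero on their zero regions; changing one coordinate at a time and using that each one-dimensional restriction is continuous with an a.e.\ bounded derivative then yields $|\gamma(\bfs)-\gamma(\bfs')| \le L\|\bfs-\bfs'\|_1$. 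On each cell this gives $\int_{\cB_j}|\gamma| \le \mV(\cB_j)\,|\gamma(\bfs_j)| + L\,\mV(\cB_j)\,\mathrm{diam}_1(\cB_j)$, where $\bfs_j\in\cB_j$. Summing over $j$, using $\mV(\cB_j) < 1/(Kp_n)$, $\mathrm{diam}_1(\cB_j) \le d\,\zeta(\cB_j)^{1/d} < d(Kp_n)^{-1/d}$, and $\sum_j \mV(\cB_j) = \mV(\cB)$ from Condition \ref{cond:loc}, the error is at most $L\,d\,(Kp_n)^{-1/d}\mV(\cB)$, which is of order $p_n^{1/(2d)}p_n^{-1/d} = p_n^{-1/(2d)}\to 0$ as $p_n\to\infty$ (guaranteed by Condition \ref{cond:num_cov_order}). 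Thus the error falls below $\epsilon/2$ for all $n>N$, completing the argument.

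The main obstacle is the global Lipschitz estimate for $\gamma$. Because $\beta$ and $\beta_0$ generally have \emph{different} support regions and are only piecewise $\mC^\rho$, the gradient of $\gamma$ may jump (a kink) across the interfaces between zero and non-zero regions, so no global mean-value theorem applies directly; the careful point is that global continuity together with a.e.\ bounded partials still yields genuine Lipschitz continuity along each coordinate direction. The second delicate point is that the quadrature error vanishes only because the sieve derivative bound $p_n^{1/(2d)}$ is of strictly smaller order than the inverse cell width $p_n^{1/d}$ --- precisely the reason the sieve $\Theta_n$ carries the exponent $1/(2d)$. One could alternatively route the argument through Lemma \ref{lem:gamma_low_bound}, after checking $\gamma \in \Lambda_n$, but lower-bounding $\lambda(\cV_{\epsilon',\gamma})$ via a layer-cake/weak-$L^1$ inequality loses a factor of $\log p_n$ and only yields $\sum_j|\gamma(\bfs_j)|$ of order $p_n/\log p_n$; the direct quadrature bound above is therefore preferable, since it attains the sharp $p_n$ rate.
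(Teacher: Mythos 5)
Your overall strategy is sound: treating $\sum_{j=1}^{p_n}|\gamma(\bfs_j)|$, with $\gamma=\beta-\beta_0$, as a Riemann sum for $\|\gamma\|_1$ over the partition $\{\cB_j\}$ of Condition 3, and killing the quadrature error with the sieve's derivative bound, does prove the lemma with $r=K\epsilon/2$, and your rate bookkeeping is exactly right: the error is $O\bigl(p_n^{1/(2d)}\,p_n^{-1/d}\bigr)=O\bigl(p_n^{-1/(2d)}\bigr)\to 0$, uniformly over $\Theta_n$ because every constant depends only on the sieve bounds, $M$, $K$, $d$ and $\mV(\cB)$. The paper relegates its proof of this lemma to the supplementary material, so a line-by-line comparison is not possible from this document; but the placement of Lemma A.2 immediately before it, with a conclusion phrased in terms of $\sum_j|\gamma(\bfs_j)|$ and $\lambda(\cV_{\epsilon,\gamma})$, strongly suggests the intended route passes through that lemma, and your closing observation about that route is apt: no lower bound on $\lambda(\cV_{\epsilon',\gamma})$ uniform over the sieve is available (tall thin spikes with height $\asymp p_n^{c}$ compatible with the $\Theta_n$ derivative bounds keep $\|\gamma\|_1=\epsilon$ while $\lambda(\cV_{\epsilon',\gamma})\to 0$ polynomially), so a layer-cake selection of a good level loses at least a $\log p_n$ factor and yields only $\epsilon p_n/\log p_n$. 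Your direct argument attains the sharp order $p_n$. You are also right to invoke Condition 3 as a standing assumption even though the statement lists only Conditions 1--2; without some such control on the $\bfs_j$ (they could cluster in a corner of $\cB$) the claim is false, so any correct proof must use it.

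The one step you must repair is the Lipschitz estimate, and it is precisely the point you flagged: the principle ``globally continuous with a.e.\ bounded partials along each coordinate implies Lipschitz'' is false in general --- the Cantor function is continuous with derivative zero on an open dense set yet is not Lipschitz. What rescues the claim here is the zero-set structure built into $\Theta$: by sparsity and continuity, $\partial\cR_1\cup\partial\cR_{-1}\subseteq\cR_0$ and $\beta$ vanishes on all of $\cR_0$ (for $\beta$'s own decomposition of $\cB$). Restrict $\beta$ to a segment in $\cB$ and write $g$ for the restriction; the preimage of $\cR_1\cup\cR_{-1}$ is a countable union of open parameter intervals on each of which $g$ is continuously differentiable with $|g'|$ controlled by $p_n^{1/(2d)}$, while $g=0$ on the complement, including the endpoints of every such interval. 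For $u<v$, either $[u,v]$ lies in the closure of one excursion (mean value theorem), or let $u'$ and $v'$ be the first and last zeros of $g$ in $[u,v]$ and bound $|g(v)-g(u)|\leq|g(u)-g(u')|+|g(v')-g(v)|\leq L(v-u)$; the increments cannot accumulate on the zero set as they do for the Cantor function, because the function returns exactly to zero between excursions. Apply this separately to $\beta$ (constant of order $p_n^{1/(2d)}$) and to $\beta_0$ (constant $M<\infty$, since $\beta_0\in\mC^{\rho}$ on the compact $\overline\cR_1\cup\overline\cR_{-1}$ and vanishes elsewhere), then subtract to get the bound for $\gamma$. With this patch --- and assuming $\cB$ is such that points can be joined by segments or coordinate paths in $\cB$, as for $\cB=[0,1]^d$ --- your proof is complete.
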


\begin{lemma}\label{lem:eta_low_bound}

For any $0<\epsilon<1$ and $0<r<\epsilon^2$,  let 
$$A_n = \left\{\sum_{i=1}^n p_n^{-1/2}\left|\sum_{j=1}^{p_n} X_{i,j}[\beta(\bfs_j)-\beta_0(\bfs_j)]\right| \geq n r\right\}.$$ There exists an integer $N$ and constant $D>0$ such that if for all $n>N$ and for all $\beta \in \Theta_n$,
\begin{eqnarray*}
\Pi\left[p_n^{-1/2}\left|\sum_{j=1}^{p_n} X(\bfs_j)[\beta(\bfs_j)-\beta_0(\bfs_j)]\right|>\epsilon \right] > \epsilon,
\end{eqnarray*}
then 
\begin{eqnarray*}
\Pi\left[A_n^C \right] \leq \exp(-D n) \quad \mbox{ and } \quad \Pi\left[\bigcup_{m=1}^{\infty}\bigcap_{n=m}^\infty A_m\right] = 1. 
\end{eqnarray*}
\end{lemma}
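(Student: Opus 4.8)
The plan is to bound the two probabilities in Lemma~\ref{lem:eta_low_bound} by treating each subject's contribution as an independent bounded (or sufficiently concentrated) summand and applying a concentration/large-deviation argument. First I would write $\xi_i = p_n^{-1/2}\left|\sum_{j=1}^{p_n} X_{i,j}[\beta(\bfs_j)-\beta_0(\bfs_j)]\right|$, so that $A_n = \{\sum_{i=1}^n \xi_i \geq nr\}$ and $A_n^C = \{\sum_{i=1}^n \xi_i < nr\}$. By Condition~\ref{cond:cov}, the $\{X_{i,j}\}_{i=1}^n$ are i.i.d.\ realizations of the process $X(\bfs)$, hence the $\xi_i$ are i.i.d.\ nonnegative random variables, each distributed as $\xi = p_n^{-1/2}\left|\sum_{j=1}^{p_n} X(\bfs_j)[\beta(\bfs_j)-\beta_0(\bfs_j)]\right|$. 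The hypothesis of the lemma is exactly that $\Pi[\xi > \epsilon] > \epsilon$, i.e.\ each $\xi_i$ has nontrivial mass above $\epsilon$.

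Next I would estimate $\Pi[A_n^C]$, the event that the empirical average of the $\xi_i$ falls below $r$. Since $r < \epsilon^2 < \epsilon$, the event $\{\sum \xi_i < nr\}$ requires an atypically small sample mean given that each $\xi_i$ exceeds $\epsilon$ with probability at least $\epsilon$. The cleanest route is to dominate the count: let $Z_i = I(\xi_i > \epsilon)$, which are i.i.d.\ Bernoulli with success probability $q = \Pi[\xi>\epsilon] > \epsilon$. On the event $A_n^C$ one must have few large $\xi_i$, so $\sum_{i=1}^n Z_i$ is forced below roughly $nr/\epsilon < n\epsilon$ (since $r < \epsilon^2$), which is below its mean $nq > n\epsilon$. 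I would then apply a Chernoff/Hoeffding bound to $\sum Z_i$ to conclude $\Pi[A_n^C] \leq \Pi[\sum_{i=1}^n Z_i \leq nr/\epsilon] \leq \exp(-Dn)$ for a constant $D>0$ depending only on $\epsilon$ (uniformly in $n>N$ and $\beta\in\Theta_n$, because the bound $q>\epsilon$ is uniform). The constant $D$ comes out of the binomial tail bound and can be taken as the relative-entropy gap between $r/\epsilon$ and $q$.

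Finally, the almost-sure statement $\Pi\left[\bigcup_{m=1}^{\infty}\bigcap_{n=m}^\infty A_n\right]=1$ follows from Borel--Cantelli applied to the complementary events. Since $\sum_{n=1}^\infty \Pi[A_n^C] \leq \sum_{n=1}^\infty \exp(-Dn) < \infty$, the first Borel--Cantelli lemma gives $\Pi[A_n^C \text{ infinitely often}]=0$, equivalently $\Pi[A_n \text{ eventually}]=1$, which is exactly $\Pi\left[\bigcup_{m}\bigcap_{n\geq m}A_n\right]=1$.

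The main obstacle is the first step: verifying that the $\xi_i$ are genuinely i.i.d.\ with the uniform lower tail bound $q>\epsilon$ holding simultaneously for all $\beta\in\Theta_n$, and confirming that the deviation constant $D$ can be chosen independent of $\beta$ and $n$. The independence is immediate from Condition~\ref{cond:cov}, and the uniformity of $q$ is supplied by the lemma's hypothesis, but care is needed to ensure the Chernoff exponent does not degrade as $n\to\infty$; this is guaranteed because $r/\epsilon < \epsilon < q$ creates a fixed gap. I would also note that no upper bound on $\xi_i$ is needed for the lower-tail argument, since dominating by the Bernoulli indicators sidesteps any heavy-tail issues in $\xi_i$ itself.
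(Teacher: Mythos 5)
Your proof is correct and follows what is evidently the intended route for this lemma: dominate $\sum_i \xi_i$ from below by $\epsilon\sum_i I(\xi_i>\epsilon)$, apply a Chernoff--Hoeffding bound to the i.i.d.\ Bernoulli indicators with success probability $q>\epsilon$ against the threshold $r/\epsilon<\epsilon$ (the fixed gap $\epsilon - r/\epsilon>0$, coming from $r<\epsilon^2$, yields a constant $D$ uniform in $n$ and $\beta\in\Theta_n$), and finish with the first Borel--Cantelli lemma. You also correctly read the displayed event $\bigcup_{m}\bigcap_{n\geq m}A_m$ as $\liminf_n A_n$ (the inner subscript is a typo in the statement), so nothing is missing.
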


\begin{lemma}\label{lem:test}
Suppose $\bfalpha_0 = (\alpha_{0,1},\ldots, \alpha_{0,q})^{\rT}$ and $\sigma^2_0$ are known. The test statistic for the hypothesis testing problem
\begin{eqnarray*}
H_0: \beta = \beta_0 \in \Theta,\qquad \mbox{ and } \qquad H_1: \beta = \beta_1 \in \Theta.
\end{eqnarray*}
is give by
$$\Psi_n[\beta_0,\beta_1] = I\left[\sum_{i=1}^n \delta_i \left(\frac{Y_i - \eta_{i,0}}{\sigma_0}\right) > 2 n^{\upsilon+1/2}\right],$$
where 
\begin{eqnarray*}
\eta_{i,m} = \sum_{k=1}^q \alpha_{0,k} W_{i,k} + p_n^{-1/2}\sum_{j=1}^{p_n} \beta_m(\bfs_j) X_{i,j},
\end{eqnarray*}
for $m = 0,1$, $\delta_i = 2I[ \eta_{i,1} > \eta_{i,0}]-1$ and $\upsilon_0/2<\upsilon<1/2$. Then for any $r>0$, there exist constants $C_0$, $C_1$, $N$ and $r_0>0$ such that for any $\beta_0$ and $\beta_1$ satisfy
$\sum_{j=1}^{p_n} |\beta_{1}(\bfs_j) - \beta_0(\bfs_j)| > r p_n$, 
for any $n > N$, we have
$$\mE_{\beta_0}[\Psi_n(\beta_0,\beta_1)] \leq C_0\exp(-2 n^{2\upsilon}).$$
and for any $\beta$ with $\|\beta-\beta_1\|_{\infty} < r_0/\{4c_{\mmax}^{1/2}\}$, 
$$ \mE_{\beta}[1-\Psi_n(\beta_0,\beta_1)] \leq C_0\exp(- C_1 n ).$$
\end{lemma}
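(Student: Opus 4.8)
Write $\epsilon_i = (Y_i-\eta_{i,0})/\sigma_0$ and $T_n = \sum_{i=1}^n \delta_i\epsilon_i$, so $\Psi_n = I[T_n > 2n^{\upsilon+1/2}]$. The plan is to handle both bounds by conditioning on the design $(\bfX_i,\bfW_i)$ and isolating the Gaussian noise. For the type I error, under $\beta=\beta_0$ the $\epsilon_i$ are i.i.d. $\mN(0,1)$ and independent of the $\delta_i$, which are measurable functions of the design alone. Since $\delta_i\in\{-1,1\}$, conditionally on the design $T_n\sim\mN(0,n)$, a law that does not depend on the design, so the tail bound $\Phi(-t)\le\tfrac12 e^{-t^2/2}$ with $t=2n^\upsilon$ gives, uniformly in the design,
$$\mE_{\beta_0}[\Psi_n] = \Pi[T_n/\sqrt n > 2n^\upsilon] = \Phi(-2n^\upsilon) \le \tfrac12 e^{-2n^{2\upsilon}},$$
establishing the first claim with $C_0=1/2$.

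For the type II error, under a generic $\beta$ write $\epsilon_i = (\eta_{i,\beta}-\eta_{i,0})/\sigma_0 + \tilde\epsilon_i$ with $\tilde\epsilon_i$ i.i.d. $\mN(0,1)$, so that $T_n = S_n + G_n$ where $G_n=\sum_i\delta_i\tilde\epsilon_i$ is $\mN(0,n)$ conditionally on the design, and the signal is $S_n=\sigma_0^{-1}\sum_i\delta_i(\eta_{i,\beta}-\eta_{i,0})$. The choice $\delta_i=\sgn(\eta_{i,1}-\eta_{i,0})$ is exactly what forces $\delta_i(\eta_{i,1}-\eta_{i,0})=|\eta_{i,1}-\eta_{i,0}|$, so splitting $\eta_{i,\beta}-\eta_{i,0}=(\eta_{i,1}-\eta_{i,0})+(\eta_{i,\beta}-\eta_{i,1})$ and using $|\delta_i|=1$ yields the bound (deterministic given the design)
$$\sigma_0 S_n \ \ge\ \sum_{i=1}^n |\eta_{i,1}-\eta_{i,0}| \ -\ \sum_{i=1}^n |\eta_{i,\beta}-\eta_{i,1}| \ =:\ U_n - P_n.$$

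I would then lower bound $U_n$ and upper bound $P_n$ on a high-probability set of designs. For $U_n$, taking $a_j=\beta_1(\bfs_j)-\beta_0(\bfs_j)$ (bounded since $\beta_0,\beta_1\in\Theta$) with $p_n^{-1}\sum_j|a_j|>r$, Condition (\ref{cond:cov}.3) supplies a $\delta>0$ verifying the hypothesis of Lemma \ref{lem:eta_low_bound}, which then produces constants $r'>0$ and $D>0$ with $\Pi[U_n < nr']\le e^{-Dn}$. For $P_n$, the assumption $\|\beta-\beta_1\|_\infty < r_0/(4c_{\mmax}^{1/2})$ together with $\rho_{\mmax}(\bfSigma_n)<c_{\mmax}$ gives, for $b_j=\beta(\bfs_j)-\beta_1(\bfs_j)$, the mean-square bound $\mE[(\eta_{1,\beta}-\eta_{1,1})^2]=p_n^{-1}b^{\rT}\bfSigma_n b < r_0^2/16$, hence $\mE[P_n]\le n\{\mE[(\eta_{1,\beta}-\eta_{1,1})^2]\}^{1/2} < nr_0/4$; choosing $r_0\le r'$ and concentrating $P_n$ about its mean gives $\Pi[P_n > nr'/2]\le e^{-cn}$. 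On the intersection of these two design events, $S_n\ge nr'/(2\sigma_0)=\Theta(n)$, which dominates the threshold $2n^{\upsilon+1/2}=o(n)$ because $\upsilon<1/2$; conditioning on the design and applying the Gaussian tail to $G_n$ then gives an exponential-in-$n$ bound on $\Pi[T_n\le 2n^{\upsilon+1/2}]$. Combining this with the $e^{-Dn}$ and $e^{-cn}$ probabilities of the complementary design events yields $\mE_\beta[1-\Psi_n]\le C_0 e^{-C_1 n}$.

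The main obstacle is the exponential upper tail $\Pi[P_n > nr'/2]\le e^{-cn}$. Its mean is controlled cleanly, and this is precisely why the radius $r_0/(4c_{\mmax}^{1/2})$ appears, since it drives the per-observation mean below $r_0/4$; but upgrading the mean bound to a genuine exponential concentration requires exponential-moment (sub-Gaussian) control of the linear functional $p_n^{-1/2}\sum_j b_j X(\bfs_j)$, which the mean-square bounds in Condition \ref{cond:cov} do not by themselves supply. I would obtain this either from a companion concentration lemma in the spirit of Lemmas A.\ref{lem:gamma_low_bound}--A.\ref{lem:eta_low_bound}, or under the sub-Gaussian structure of the paper's running Gaussian-process model for $X$. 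A secondary, purely bookkeeping difficulty is the ordered choice of constants: first extract $r'$ from Lemma \ref{lem:eta_low_bound}, then fix $r_0\le r'$, and finally invoke $\upsilon<1/2$ so that the deterministic threshold is asymptotically negligible against the linear-in-$n$ signal.
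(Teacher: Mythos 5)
Your architecture is the one this lemma's statement is engineered for, and most of it is sound. The type-I bound is complete and exact: since $\delta_i=\pm 1$ are design-measurable, conditionally on the design $\sum_i\delta_i\epsilon_i\sim\mN(0,n)$ regardless of the design, and the Gaussian tail gives the stated $C_0\exp(-2n^{2\upsilon})$. On the type-II side, the decomposition $\sigma_0 S_n\ge U_n-P_n$ exploiting $\delta_i(\eta_{i,1}-\eta_{i,0})=|\eta_{i,1}-\eta_{i,0}|$ is the right use of the sign weights; invoking Condition (\ref{cond:cov}.3) to verify the hypothesis of Lemma A.\ref{lem:eta_low_bound} and obtain $\Pi[U_n<nr']\le e^{-Dn}$ is exactly what that lemma is for; and your mean-square computation via $\rho_{\mmax}(\bfSigma_n)<c_{\mmax}$ is precisely what calibrates the radius $r_0/\{4c_{\mmax}^{1/2}\}$ in the statement. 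The endgame — $2n^{\upsilon+1/2}=o(n)$ because $\upsilon<1/2$, then a conditional Gaussian tail for $G_n$ on the good design event — is also correct, as is your ordering of constants ($r'$ from Lemma A.\ref{lem:eta_low_bound} first, then $r_0\le r'$).

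The one genuine gap is the one you flagged yourself: the exponential upper tail $\Pi[P_n>nr'/2]\le e^{-cn}$. You are right that this cannot follow from Condition \ref{cond:cov} as literally stated. $P_n$ is a sum of $n$ i.i.d.\ nonnegative terms $|\eta_{i,\beta}-\eta_{i,1}|$ whose second moment you control, but second moments alone yield only Chebyshev/Markov rates (Markov gives merely the constant bound $r_0/(2r')$, useless here), and a distribution with two moments and polynomial tails admits no $e^{-cn}$ large-deviation bound at deviations of order $n$. So sub-Gaussian (or at least Bernstein-type) control of the linear functional $p_n^{-1/2}\sum_j b_jX(\bfs_j)$ must be imported, exactly as you propose: in the paper's running model where $X(\cdot)$ is a Gaussian process, $\eta_{i,\beta}-\eta_{i,1}\sim\mN(0,\kappa^2)$ with $\kappa^2<r_0^2/16$, so each term is sub-Gaussian with mean below $r_0/4$ and Hoeffding closes the step. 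Your diagnosis of the asymmetry is also correct: the lower-tail event for $U_n$ needs no such assumption because Lemma A.\ref{lem:eta_low_bound} reduces it to Bernoulli concentration of the indicators $I[|\eta_{i,1}-\eta_{i,0}|>\epsilon]$. One further bookkeeping caution: Condition (\ref{cond:cov}.3) requires $|a_j|=|\beta_1(\bfs_j)-\beta_0(\bfs_j)|<M$ for a fixed $M$, which is automatic for a fixed pair in $\Theta$ but requires care when Lemma \ref{lem:uniform_test} later applies this lemma uniformly over the sieve $\Theta_n$, where $\|\beta_1\|_\infty$ may grow like $p_n^{1/(2d)}$; this uniformity issue is shared with the paper's own development rather than a defect specific to your plan.
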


\section*{Appendix 2}
\subsection*{Proofs of Lemma \ref{lemma:latent_process} }
\begin{proof}
For any $\lambda_0 > 0$, set $\alpha(\bfs) = \beta_0(\bfs) + \lambda_0$ for $\bfs \in \overline\cR_1$ and $\alpha(\bfs) = \beta_0(\bfs) - \lambda_0$ for $\bfs \in \overline\cR_{-1}$. Then by condition (C1), $\alpha(\bfs)$ is smooth over $\overline \cR_1 \cup \overline \cR_{-1}$, i.e.
$$\alpha(\bfs)I[\bfs \in \overline \cR_1 \cup \overline \cR_{-1}] \in \mC^{\rho}(\overline \cR_1 \cup \overline \cR_{-1}).$$
Next, we define $\alpha(\bfs)$ on another closed subset of $\cB$.   Since $\cB$ is compact, $\partial\cR_k$ for $k = -1,1$ is also compact. For any $\epsilon > 0$ and each $\bft \in \cB$, define an open ball $B(\bft,r) = \{\bfs: \|\bft-\bfs\|_2<r\}$, where $\|\cdot\|_2$ is the Euclidean norm. For $k = -1,1$,   note that
$$ \partial \cR_k \subseteq \bigcup_{\bft\in \partial\cR_k} B(\bft,r),$$
Since $\partial\cR_1 \cup\partial\cR_{-1}$ is compact, there exist $\bft_l\in \partial \cR_1 \cup \partial \cR_{-1}$, for $1\leq l\leq L$, such that
$$  \partial \cR_{-1} \subseteq \bigcup_{l=1}^{L_0} B(\bft_l,r),\qquad   \partial \cR_1 \subseteq \bigcup_{l=L_0+1}^{L} B(\bft_l,r)$$

Let  $\cR^*_0(r) = \cR_0 -  \bigcup_{l=1}^{L} B(\bft_l,r)$, then $ \cR^*_0 \subseteq \cR_0 - \partial \cR_1 \cup \partial \cR_{-1}$. Note that $\cR_0 -\partial \cR_1 \cup \partial \cR_{-1}$ is a non-empty open set, $\cR^*_0(r)$ is its closed subset and $\cR^*_0(r)$ will increase as $r$ decreases. There exists an $r_0$,  $0<r_0<1$, such that $\cR^*_0(r_0) \neq \emptyset$ and  $\left(\bigcup_{l=1}^{L_0} B(\bft_l,r_0)\right)\cap \left(\bigcup_{l=L_0}^{L} B(\bft_l,r_0)\right) = \emptyset$. The latter fact is due to $\cR_1 \cap \cR_{-1} = \emptyset$.  Since $\cR_{1}\cup\cR_{-1}$ is bounded and $\alpha\in\mC^{\rho}(\cR_{1}\cup\cR_{-1})$, then $M = \max_{0<\|\bftau\|_1\leq \rho} \sup_{\bft\in \cR_1\cup \cR_{-1}}|D^{\bftau}\alpha(\bft)| < \infty$. Take $r = \min\left\{\lambda_0/\{2M(\rho+1)^d+1\},r_0\right\}$.  Define $\alpha(\bfs) = 0$ if $\bfs \in \cR^*_0(r)$. Then $\alpha(\bfs)$ is well defined on a closed set $\cR^* = \cR^*_0 \cup \overline \cR_1 \cup \overline \cR_{-1}$, where $\cR^*_0 = \cR^*_0(r)$. 

Define a function 
\begin{eqnarray}\label{eq:phi}
\phi(\bfs,\bft)  &=& \sum_{\|\bftau\|_1\leq \rho} \frac{D^{\bftau}\alpha (\bft)}{\bftau!}(\bfs - \bft)^{\bftau}\\
&=& \alpha(\bft) + \sum_{0<\|\bftau\|_1\leq \rho} \frac{D^{\bftau}\alpha (\bft)}{\bftau!}(\bfs - \bft)^{\bftau}
\end{eqnarray}
when $\bft \in \partial \cR_1 \cup \partial \cR_{-1}$ and $\bfs \in  B(\bft,r_0)$, 
$$|\phi(\bfs,\bft)| \leq |\lambda_0| + \bigg|\sum_{0<\|\bftau\|_1\leq \rho} \frac{D^{\bftau}\alpha (\bft)}{\bftau!}(\bfs - \bft)^{\bftau}\bigg | \leq \lambda_0 + 2 M (\rho+1)^d r_0 $$
when $\bft \in \partial \cR^*_0(r_0)$ and $\|\bfs-\bft\| < r_0$, 
$$|\phi(\bfs,\bft)|\leq   2M (\rho+1)^d r_0.$$

Define 
\begin{eqnarray*}
\psi(\bft) = \sum_{l=1}^L \psi(\bft,\bft_l). 
\end{eqnarray*}
where 
$$
\psi(\bft,\bft_l) = C_l\exp\left\{-\frac{1}{1-\|\bft-\bft_l\|_2/r}\right\}I[\|\bft-\bft_l\|_2<r].
$$
We choose $C_l$ for $l = 1,\ldots, L$ such that
\begin{eqnarray*}
\sum_{l=1}^{L}\int_{B(\bft_l,r)}\psi(\bft,\bft_l) \md \bft = 1,\qquad \mbox{and} \sum_{l=L_0+1}^{L}\int_{B(\bft_l,r)}\psi(\bft,\bft_l)\md \bft < 1 - \frac{2M(\rho+1)^d r}{\lambda_0}. 
\end{eqnarray*}

We construct $\widetilde \beta_0(\bfs)$ by extending $\alpha(\bfs)$ from $\cR^*$ to the whole domain $\cB$. Let 
\begin{eqnarray}\label{eq:beta_0}
\widetilde\beta_0(\bfs) = 
\left\{\begin{array}{rl} \int_{\partial \cR^*} \phi(\bfs, \bft) \psi(\bft) \md \bft ,&\bfs \in \cB - \cR^*\\
\alpha(\bfs), & \bfs \in \cR^*
\end{array}\right.
\end{eqnarray}

Note that
\begin{eqnarray*}
&&\bigg|\int_{\partial \cR^*} \phi(\bfs, \bft) \psi(\bft) \md \bft\bigg|\\
&\leq&\sum_{l=1}^L \int_{\partial \cR^* \cap B(\bft_l,r)} \big| \phi(\bfs, \bft) \big| \psi(\bfs,\bft) \md \bft\\
&\leq&\sum_{l=1}^L \left(\int_{\partial \cR^*_0 \cap B(\bft_l,r)}\big| \phi(\bfs, \bft)\big| \psi(\bfs,\bft) \md \bft + \int_{\partial (\cR_1\cup \cR_{-1}) \cap B(\bft_l,r)} \big|\phi(\bfs, \bft) \big|\psi(\bfs,\bft) \md \bft\right)\\
&< & 2M(\rho+1)^d r (1-w_{1}) + (\lambda_0+2M(\rho+1)^d r) w_{1} < \lambda_0
\end{eqnarray*}
where
\begin{eqnarray*}
w_1 =  \sum_{l=L_0+1}^{L}\int_{B(\bft_l,r)}\psi(\bft,\bft_l)\md \bft < 1 -  \frac{2M(\rho+1)^d r}{\lambda_0}. 
\end{eqnarray*}

Next, we show that  
$$\lim_{\bfs\rightarrow \bfs_0} D^{\bftau} \widetilde \beta_0(\bfs) = D^{\bftau} \alpha(\bfs_0), \mbox{ for any } \bfs_0 \in \partial \cR^*\mbox{ and }\bftau \mbox{ with } \|\bftau\|_1\leq \rho. $$

For any $\epsilon$, $0<\epsilon<1$, since $D^{\bftau}\alpha$ is continuous over $\cR^*$, there exists some $\delta_1>0$, for all $\bft$ such that $\|\bft -\bfs_0\|<\delta_1$, we have $|D^{\bftau}\alpha(\bft) - D^{\bftau}\alpha(\bfs_0)|  < \epsilon/2$.  Take $\delta < \min\{\epsilon/\{2 (\rho+1)^d M\}, r, \delta_1\}$, as long as $\|\bfs_0 - \bfs\|< \delta$, we have
\begin{eqnarray*}
|D^{\bftau} \widetilde \beta_0(\bfs) - D^{\bftau} \widetilde \alpha(\bfs_0)| &\leq & \int_{\partial \cR^*}|D^{\bftau}\alpha(\bft)-D^{\bftau}\alpha(\bfs_0)|\psi(\bft)\md \bft \\
&&+ \sum_{\|\bftau'\|_1 \geq \|\bftau\|_1, \bftau'\neq \bftau} \int_{\partial \cR^*}\frac{|D^{\bftau'}\alpha(\bft)|}{\bftau'!}|(\bfs - \bft)^{\bftau'}| \psi(\bft) \md \bft \\
&<& \frac{\epsilon}{2} + (\rho+1)^d M \delta < \epsilon. 
\end{eqnarray*}

By condition 2 and the definition of $\alpha(\bfs)$, we have 
\begin{eqnarray}\label{eq:alpha}
\alpha(\bfs) = \lambda_0, \mbox{ for } \bfs \in \partial\cR_1, \qquad \mbox{and} \qquad \alpha(\bfs) = -\lambda_0,\mbox{ for } \bfs\in\partial\cR_{-1}
\end{eqnarray}
\end{proof}

\subsection*{Proof of Theorem \ref{thm:post_consistency}}
\begin{proof}
The proof can be done by verifying the conditions in Theorem A.1 of \cite{choudhuri2004bayesian}. Specifically, we have the condition on prior positivity of neighborhoods by Lemma \ref{lem:prior_pos}. By Lemma A\ref{lem:tail_prob}, Lemma \ref{lem:uniform_test} and Condition \ref{cond:num_cov_order}, as $n\to\infty$,
\begin{eqnarray*}
\mE_{\beta_0}\Psi_n &\to& 0,\\
\sup_{\beta \in \cU_{\epsilon}^{C}\cap \Theta_n}\mE_{\beta}[1-\Psi_n]&\leq& C_0 \exp(-C_1 n),\\
\Pi(\Theta_n^C) &\leq& K \exp(-b p_{n}^{1/d}) \leq K \exp(- C_3 n).
\end{eqnarray*}
This establishes the condition on the existence of tests. 
\end{proof}
\subsection*{Proof of Theorem 3}
\begin{proof}
Define $\cU_{\epsilon} = \{\beta\in\Theta: \|\beta-\beta_0\|_1<\epsilon\}$. 
Let $\cR_0 = \{\bfs:\beta_0(\bfs)=0\}$, $\cR_1 = \{\bfs:\beta_0(\bfs)>0\}$ and $\cR_{-1.} = \{\bfs:\beta_0(\bfs)<0\}$. 

For any $\cA \subseteq \cB$ and any integer $m\geq 1$, define 
$$\cF_m(\cA) = \left\{\beta\in\Theta: \int_{\cA} | \beta(\bfs) - \beta_0(\bfs) | \md \bfs<\frac{1}{m}\right\}.$$
Then $\cF_{m+1}(\cA)  \subseteq \cF_{m}(\cA)$ for all $m$ and $\cF_m(\cB) \subseteq \cF_{m}(\cA)$.

Consider
$$\cF_m(\cR_0)  = \left\{\beta\in\Theta: \int_{\cR_0} | \beta(\bfs) | \md \bfs<\frac{1}{m}\right\}.$$
By Theorem \ref{thm:post_consistency}, note that $\cU_{1/m} = \cF_m(\cB)$, thus,
$$\Pi(\cF_m(\cR_0) \mid \bfD_n) \geq \Pi(\cU_{1/m} \mid \bfD_n) \to 1,$$
 as $n\to \infty$ in $P_{\beta_0}^n$ probability.
 Also, 
 $$ \left\{\beta(\bfs) = 0,\ \mathrm{for\ all}\  \bfs \in \cR_0\right\} = \left\{\int_{\cR_0} | \beta(\bfs) | \md \bfs = 0\right\} = \bigcap_{m=1}^{\infty}\cF_m(\cR_0). $$
 By the monotone continuity of probability measure, we have
 \begin{eqnarray}\label{eq:zero_fp}
 \Pi\left\{\beta(\bfs) = 0 ,\ \mathrm{for\ all}\  \bfs \in \cR_0\mid \bfD_n\right\}  = \lim_{m\to\infty} \Pi(\cF_m(\cR_0) \mid \bfD_n) = 1.
 \end{eqnarray}
 as $n\to \infty$ in $P_{\beta_0}^n$ probability.
 
For any $\bfs_0 \in \cR_1$ and any integer $m\geq 1$, by Condition \ref{cond:beta}.3, there exists $\delta_0>0$, such that for any $\bfs_1\in B(\bfs_0,\delta_0) =\{\bfs: \|\bfs_1 - \bfs_0\|_1 <\delta_0\}$, such that
$$|\beta(\bfs_1) - \beta(\bfs_0)| < \frac{1}{2m}.$$
By Definition \ref{cond:beta}, $\cR_1$ is an open set, then there exists $\delta_1>0$, such that $B(\bfs_0,\delta_1) \subseteq \cR_1$. Taking $\delta = \min\{\delta_1,\delta_0\} > 0$, we have that
\begin{eqnarray*}
\lefteqn{\left\{\beta(\bfs_0) > -\frac{1}{m}, \ \mathrm{for\ all}\  \bfs_0\in\cR_1\right\}}\\
&\supseteq & \left\{\beta(\bfs_0) > \beta(\bfs_1) - \frac{1}{2m} \mbox{ and } \beta(\bfs_1)> - \frac{1}{2m}, \mbox{for some } \bfs_1 \in B(\bfs_0,\delta), \ \mathrm{for\ all}\  \bfs_0\in\cR_1\right\} \\
&\supseteq & \left\{ \int_{B(\bfs_0,\delta)}\beta(\bfs) d \bfs > -\frac{1}{2m},\ \mathrm{for\ all}\  \bfs_0\in\cR_1\right\}\\
&\supseteq & \left\{ \int_{B(\bfs_0,\delta)}\beta(\bfs) d \bfs > \int_{B(\bfs_0,\delta)}\beta_0(\bfs) d\bfs-\frac{1}{2m},\ \mathrm{for\ all}\  \bfs_0\in\cR_1\right\} \\
&\supseteq & \cF_{2m}[B(\bfs_0,\delta)] \supseteq \cU_{1/2m}
\end{eqnarray*}
 thus,
$$\Pi\left(\beta(\bfs_0) > -\frac{1}{m}, \ \mathrm{for\ all}\  \bfs_0\in\cR_1 \mid \bfD_n\right) \geq \Pi(\cU_{1/2m} \mid \bfD_n) \to 1,$$
 as $n\to \infty$ in $P_{\beta_0}^n$ probability.
 By the monotone continuity of probability measure, we have
 \begin{eqnarray}\label{eq:pos_fp}
 \Pi\left\{\beta(\bfs) > 0 ,\ \mathrm{for\ all}\  \bfs \in \cR_1\mid \bfD_n\right\}  = \lim_{m\to\infty} \Pi\left(\beta(\bfs_0) > -\frac{1}{m}, \ \mathrm{for\ all}\  \bfs_0\in\cR_1 \mid \bfD_n \right) \to 1.
 \end{eqnarray}
  as $n\to \infty$ in $P_{\beta_0}^n$ probability. Similar arguments can be made to show
 \begin{eqnarray}\label{eq:neg_fp}
 \Pi\left\{\beta(\bfs) < 0 ,\ \mathrm{for\ all}\  \bfs \in \cR_{-1}\mid \bfD_n\right\}  \to 1.
 \end{eqnarray}
   as $n\to \infty$ in $P_{\beta_0}^n$ probability. 
Combing \eqref{eq:zero_fp} -- \eqref{eq:neg_fp} completes the proof. 
\end{proof}

\bibliography{STGP}
\end{document}